\newtheorem{thm}{Theorem}[section]
\theoremstyle{definition}
\begin{document}
\makeatletter
\def\ps@pprintTitle{%
   \let\@oddhead\@empty
   \let\@evenhead\@empty
   \def\@oddfoot{\reset@font\hfil\thepage\hfil}
   \let\@evenfoot\@oddfoot
}
\makeatother
\begin{frontmatter}

\title{ Scrutiny of stagnation region flow in a nanofluid suspended permeable medium due to inconsistent heat source/sink}
\author[1]{Rakesh Kumar}
\ead[a]{rakesh@cuhimachal.ac.in}
\author[1]{Ravinder Kumar}
\address[1]{Department of Mathematics, Central University of Himachal Pradesh, Dharamshala, India}
\begin{abstract}
In present analysis, nanofluid transport near to a stagnation region over a bidirectionally deforming surface is scrutinized. The region is embedded with Darcy-Forchheimer medium which supports permeability. The porous matrix is suspended with nanofluid, and surface is under the influence of inconsistent heat source/sink.  Using similarity functions, framed governing equations are switched to a collection of ordinary differential equations. Output is procured via optimal homotopy asymptotic method (OHAM). Basic notion of OHAM for a vector differential set-up is presented along with required convergence theorems. At different flow stagnation strengths, nanofluid behavior is investigated with respect to variations in porosity parameter, Forchheimer number, Brownian motion, stretching ratio, thermophoretic force, heat source/sink and Schimdt number. Stagnation flow strength invert the pattern of boundary layer profiles of primary velocity. Heat transfer has straightforward relation with Forchheimer number when stagnation forces dominate stretching forces.\\
{\bf{Keywords:}} Nanofluid flow, Stagnation strength, Forchheimer number, Heat source/sink, OHAM. 
\end{abstract}
\end{frontmatter}

\section{Introduction}
Three dimensional stagnation flow (having three velocity components) inevitably appears when mass of fluid falls on a solid body (flat / non-flat) or  when solid roaming surfaces in a viscous fluid are considered. This type of flow is possible for bodies of all shapes because in this case the flow near stagnation region can be simulated by the tangent plane \cite{borrelli2013}. This flow has marvelous applications in numerous  hydrodynamics processes like cooling of electronic apparatus, drag devaluation and radial diffusers.
Heat and mass transport analysis of such flows over deforming surfaces when suspended with nanoparticles is a progressive area of research in fluid dynamics due to the associated applications in bio science, nuclear science and industries \cite{ramesh}. Some works on stagnation flow (three dimensional) of nanofluid over various geometries are given in \cite{bachok2010}-\cite{nadeemabbas2018}.\\

In thermodynamics, any object which is used to produce heating/cooling is considered as a source / sink (natural/man-made) such as Sun, air, electricity, air conditioner and heater. Heat transport dynamics due to heat generation/absorption has multiple applications in data center cooling, laboratory cooling, health care air conditioning, food store, heating water or other liquids, increasing the temperature of metal pieces, environmental conditioning, food processes. Heat source/sink dynamics is crucial in controlling the temperature of industrial devices as indicated by Singh and Kumar \cite{singhkumar2009}. Ram et al. \cite{pram2017} studied heat generation effects on the flow of MHD fluid along heated plate (fluctuating cosinusoidally). Hayat et al. \cite{hayatandshehzad} examined MHD Maxwell flow accounting heat source/sink due to the variable nature of thermal conductivity, and exploited homotopy analysis method. Hayat et al. \cite{hayatandalsaidi} also inspected Jeffrey fluid transport along a surface stretching in two lateral directions considering heat source/sink. In practical approach, heat generation/absorption can be considered as a function of space and time variables in order to optimize heat flow field \cite{sulochna}. Chamkha et al. \cite{chamkhapof2017} examined the effects of heat source/sink locations inside a nano-powder filled porous enclosure and found that convective heat transfer is curtailed with the addition of nanoparticles. Some models on variable heat source/sink are developed in \cite{raju}-\cite{eldahab2004}.    \\

Other pertinent aspect in fluid dynamics is flow due to nanofluid suspended porous medium where flow and heat distribution is substantially influenced by its presence. A porous medium is said to be permeable if the pores are inter-connecting, and is characterized by its porosity (an empty space in the medium through which fluid is allowed to flow). High porosity of a medium is a natural requirement to deal with wider cross sectional areas of the medium and related higher velocities of the fluid. Here Darcy-Forchheimer model is an improved model which assists the flow over a surface considering a square velocity factor in momentum equations and high porosity. Shehzad et al. \cite{shehzadandabbasi} exploited Darcy-Forchheimer medium  to investigate an Oldroyd-B liquid flow. Hayat et al. \cite{hayatdarcy} discussed flow of carbon nanotubes (three dimensional) through Darcy-Forchheimer medium.  \\

Three dimensional boundary layer equations used to simulate nano-powder transport are highly nonlinear in nature due to presence of inertial terms in these equations (\cite{jcis2017}-\cite{joml2019}). Further complexity is added by the inclusion of nanoparticles in traditional fluids and considering Brownian motion and thermophoretic force effects (\cite{rip2018}, \cite{drrakesh1}). For such equations, there exists is no standard procedure to obtain exact solutions. Thus researchers are trying to develop analytical schemes which provide accurate solutions. One strong method among the class of known methods is OHAM. This method provides approximate analytical solution in a series format quite easily. Main feature of this technique is the control over the region and parameters of convergence \cite{vasile2009}. Due to this, OHAM is revisited for present investigation. \\

As per our knowledge there is no study on the scrutiny of a three dimensional stagnation flow of nanofluid over a stretchable sheet  along with the consideration of  Darcy-Forchheimer model for porous matrix and inconsistent heat source/sink. Here, basic theorems are also developed by revisiting OHAM for accuracy and convergence, and presenting a treatment for vector differential equations.

\section{Geometrical and mathematical formulations}
Let us take into account the flow which creates a stagnation region on one side of a finite flat sheet and is submerged in nanofluid suspended medium as shown in figure \ref{geometry}. Let $D^*=(0,l]\times (0,m] \times [0, \infty)$ be the domain of flow field and $\partial D^*$ be its boundary. Let surface stretching velocities be $U_w(x,y)=U_0(x+y+a)$ and $V_w(x,y)=U_1(x+y+a)$ in $x$ and $y$ directions respectively for constants $U_0$ and $U_1$. Let straining velocities at free stream be $U_\infty(x,y)=V_\infty(x,y)=U_2(x+y+a)$ for constant $U_2$. We choose a fixed coordinate system so that the origin may be treated as a stagnation point.  \\
\begin{figure}[!t]
\begin{center}
{\includegraphics[scale=.25]{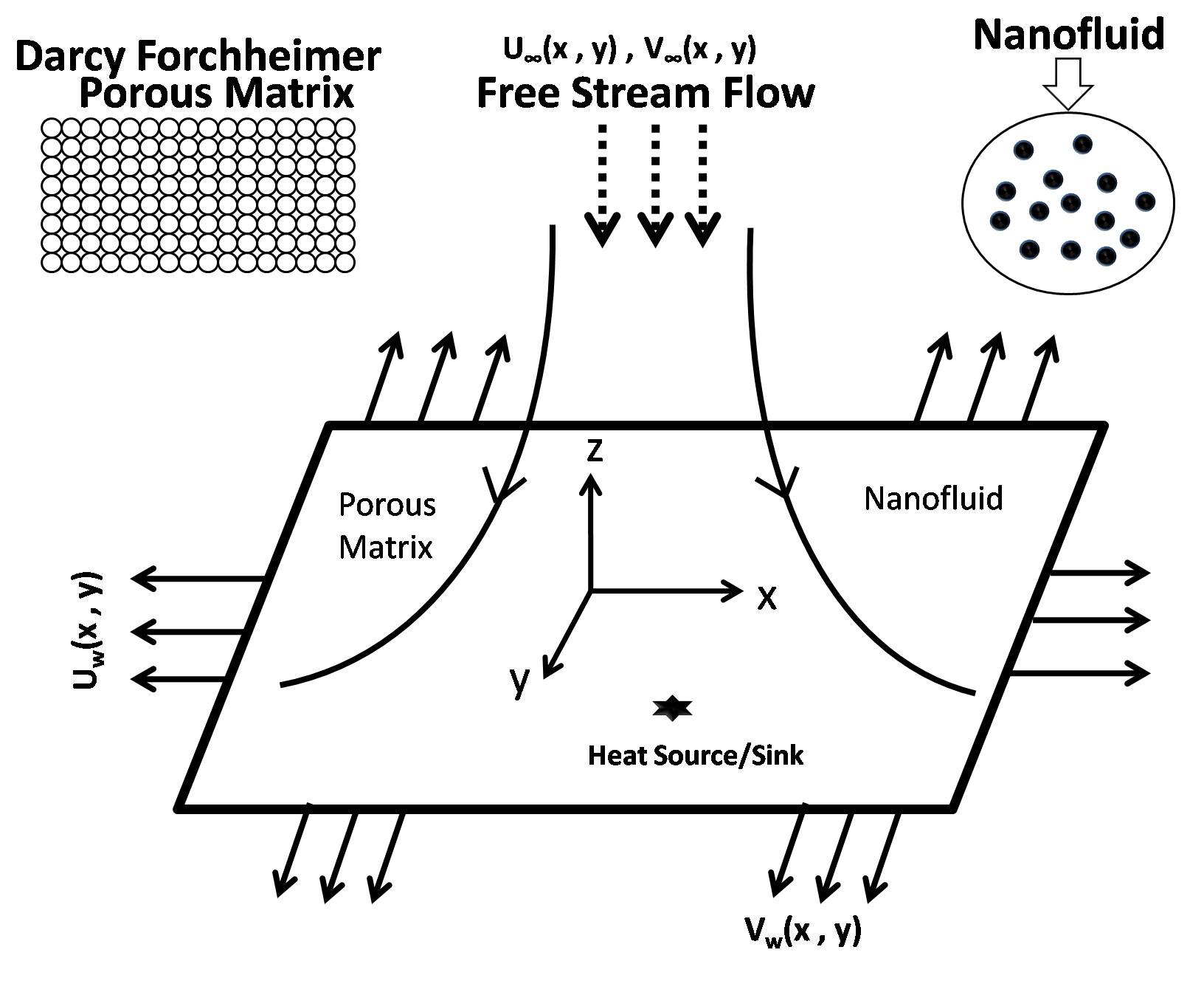}}
\end{center}
\caption{Geometrical representation of the problem }\label{geometry}
\end{figure}
Let $I_c$ be the inertia coefficient due to Darcy-Forchheimer porous medium which is defined as:
\begin{align}
I_c=\frac{c_b}{(x+y+a)\sqrt{{K_1}}},
\end{align}
where $c_b$ is the drag coefficient.\\
The inconsistent heat source/sink (having space/temperature dependency) in a region near to stagnation point embedded in Darcy Forchheimer medium is approximated as (\cite{mabood}, \cite{eldahab2004}):
\begin{align}
q'''=\dfrac{kU_w}{ (x+y+a)\nu_f}\left[Q_1^*(T_w-T_\infty)\exp \left({-z \sqrt{\left(\dfrac{U_0}{\nu_f}\right)}}\right)+Q_2^*(T-T_\infty))\right],
\end{align}
where $Q_1^*$, $Q_2^*$ are constant coefficients of heat generation/absorption.\\
The system of governing equations refined under boundary layer approximations and pressure elimination is produced as (\cite{hayattransformation}-\cite{mustafa}):  
\begin{align}\label{bl1}
u_x+v_y+w_z&=0 \quad \text{in} \,\,\, D^*,
\end{align}
\begin{align}\label{bl2}
u u_x+v u_y+w u_z=&U_\infty({U_\infty})_x+V_\infty({U_\infty})_y+\nu_{f}\underset{z}{\Delta}u -(u-U_\infty)\left[({\nu_{f}}/{K_1})+I_c(u+U_\infty)\right] \quad \text{in} \,\,\, D^*,
\end{align}
\begin{align}\label{bl3}
u v_x+v v_y+w v_z=&U_\infty({V_\infty})_x+V_\infty({V_\infty})_y+\nu_{f}\underset{z}{\Delta}v -(v-V_\infty)\left[({\nu_{f}}/{K_1})+I_c(v+V_\infty)\right] \quad \text{in} \,\,\, D^*,
\end{align}
\begin{align}\label{bl4}\nonumber
uT_x+vT_y+wT_z&=\left[{k_{f}}/{(\rho c_p)_{f}}\right]\underset{z}{\Delta}T+\left[{(\rho c_p)_{np}}/{(\rho c_p)_f}\right]\left(D_B C_z T_z+{D_T}\left(T_z\right)^2/{T_{\infty}} \right)&\\
&+\frac{q'''}{(\rho c_p)_f} \quad \text{in} \,\,\, D^*,&
\end{align}
\begin{align}\label{bl5}
u C_x+v C_y+w C_ z=D_{B}\underset{z}{\Delta}C+({D_T}/{T_{\infty}})\underset{z}{\Delta}T \quad \text{in} \,\,\, D^*
\end{align} 
which satisfies following conditions on $\partial{D^*}$ (boundary of $D^*)$:
\begin{equation}
\left.
\begin{aligned}
(u, v, w, T, C)&=(U_w(x,y), V_w(x,y), 0, T_w, C_w) \quad &\mbox{for} \quad &z=0\\
(u, v, T, C)&=(U_{\infty}(x,y), V_\infty(x, y), T_\infty, C_\infty) \quad &\mbox{for} \quad &z\rightarrow \infty
\end{aligned}
 \right\}.
\end{equation}
Here $\underset{z}{\Delta}$ is a one variable Lapacian operator. 
Let us choose following similarity transformations \cite{drrakesh2}:
\begin{equation}
\left.
\begin{aligned}
&\eta={z}\left(\dfrac{U_0}{\nu_f}\right)^{0.5}, \quad \psi=\left({\nu U_0}\right)^{0.5}{F_1(\eta)}, \quad u=U_0(x+y+a) F_{1}',\quad v=U_0(x+y+a) F_{2}', &\\
& T=T_{\infty}+(T_w-T_{\infty})\Theta, \quad C=C_{\infty}+(C_w-C_{\infty})\Phi,\quad w=-\left({\nu U_0}\right)^{0.5}\left(F_1+F_2)\right)&\\
\end{aligned}
 \right\}
\end{equation}
With the aid of these variations, nanofluid flow satisfies following ordinary differential equations:
\begin{align}\label{e1}
F_1'''+(F_1+F_2)F_1''-F_1'(F_1'+F_2')-KF_1'-Fr F_1'^2+V_r\left\{(2n+Fr)V_r+K\right\}=0,
\end{align} 
 \begin{align}\label{e2}
F_2'''+(F_1+F_2)F_2''- F_2'(F_1'+F_2')-K F_2'-Fr F_2'^2+V_r\left\{(2n+Fr)V_r+K\right\}=0,
\end{align} 
\begin{align}\label{e3}
\Theta''+Pr(F_1+F_2)\Theta'+Pr(Nb\Theta'\Phi'+Nt\Theta'^2)+(Q_1^* e^{-\eta}+Q_2^* \Theta)=0,
\end{align} 
 \begin{align}\label{e4}
 \Phi''+Sc(F_1+F_2)\Phi'+\frac{Nt}{N_b}\Theta''=0
 \end{align}
 and boundary conditions 
\begin{equation}\label{b1}
\left.
\begin{aligned}
 (F_1+F_2, F_1', F_2', \Theta, \Phi)&=\left(0, 1, \lambda, 1, 1 \right) \quad &\text{at} \quad &\eta=0, \\
(F'_1, F_2', \Theta, \Phi)&=(V_r, V_r, 0, 0) \quad &\text{as} \quad &\eta \rightarrow\infty
\end{aligned}
 \right\}.
\end{equation}

 The expressions for parameters appeared in the above equations are:
\begin{equation}\nonumber
\left.
 \begin{aligned}
&Fr\text{(Forchheimer parameter)}=\dfrac{c_b}{\sqrt{K_1}},\hspace{0.4cm} K\text{(porous medium permeability)}=\dfrac{\nu_f}{U_0 K_1},   \\& V_r\text{(stretching velocity ratio parameter)}= \dfrac{U_2}{U_0},\hspace{0.4cm}Nb\text{(Brownian motion number)}=\dfrac{\tau D_B(C_w-C_\infty)}{\nu_f},\\& Sc\text{(Schmidt number)}=\dfrac{\nu_f}{D_B},\hspace{0.4cm}  Nt\text{(thermophoresis number)}=\dfrac{\tau D_T(T_w-T_\infty)}{T_\infty \nu_f}, \hspace{0.4cm} Pr\text{(Prandtl number)}=\dfrac{\nu_f}{\alpha_f}, \\& \lambda\text{(stretching velocity ratio parameter)}= \dfrac{U_1}{U_0}.\hspace{1.6cm} 
 \end{aligned}
 \right.
\end{equation}

In fluid dynamics, practical aspects of the problem are explored through coefficients of skin friction $(C_{sfx}, C_{sfy})$, Nusselt number$(Nu_x)$ and Sherwood number$(Sh_x)$ are  taken as in the work of Kumar et al. \cite{drrakesh3}
\begin{align*}
C_{sf_{x}}=\dfrac{\mu_f}{\rho_f U_w^2}\left(u_z\right)_{z=0},\hspace{2cm} C_{sf_{y}}=\dfrac{\mu_f}{\rho_f V_w^2}\left(v_z\right)_{z=0},\hspace{1cm}
\end{align*}
\begin{align*}
Nu_x=\dfrac{-(x+y+a)}{(T_w-T_\infty)}\left(T_z\right)_{z=0},\hspace{0.9cm} Sh_x=\dfrac{-(x+y+a)}{(C_w-C_\infty)}\left(C_z\right)_{z=0}.
\end{align*}
The dimensionless form of these expressions will be
\begin{align*}
\sqrt{Re_\chi}C_{F_{1}{\chi}}=F_1''(0),\hspace{1.6cm} \sqrt{Re_\chi}C_{F_{2}{\chi}}=F_2''(0),\hspace{0.6cm}
\end{align*}
\begin{align*}
\dfrac{Nu_{\chi}}{\sqrt{Re_\chi}}=-\Theta' (0),\hspace{1.8cm}\dfrac{Sh_{\chi}}{\sqrt{Re_\chi}}=-\Phi' (0), \hspace{1cm}
\end{align*}
where $Re_\chi \text{(local Reynolds number)}=\dfrac{U_w(x,y)\chi}{\nu_f}$ for $\chi=(x+y+a)$.

\section{Essential features of OHAM, convergence and application}
The coupled set of equations  (\ref{e1})-(\ref{e4}) with restrictions (\ref{b1}) is  solved using optimal homotopy asymptotic method. Previous studies indicate that authentic solutions can be produced if first or second order of approximations in this method is considered (\cite{maboodoham}-\cite{vajraveluoham}). Other nice aspect of this method is that the solution travels from initial guess to final approximate solution as homotopy parameter ($q$) is allowed to vary from $q=0$ to $q=1$. Next we present the basic approach of OHAM for vector differential equations, and to validate its accuracy, convergence theorems are also developed. \\

Now to explain the practice of OHAM, we look at differential equations (coupled system) of the type
\begin{align}\label{oe}
{\bf{A}}{\bf {F}}+{\bf{B}}{\bf{F}}+{\bf {G}}={\bf{O}} \quad \text{on} \quad D^*
\end{align}
with boundary restraints
\begin{align}\label{oe1}
{\bf {I}}\left( {\bf {F}}, {\bf {F'}}  \right)={\pmb{\gamma}} \quad \text{on}\quad \partial D^*.
\end{align}
For $j=1(1)m$, $H_j$ homotopies  
$H_j (\eta,q):D^*\times [0,1]\rightarrow \mathbb{R}$  satisfy following homotopy equation:
\begin{align}\label{Oham2}
&(1-q)[{\bf{A}}({\bf{F}})+{\bf {G}}_0]-H^*(q)[\left( {\bf{B}}({\bf{F}})+{\bf{A}}({\bf{F}}\right)+{\bf {G}}]={\bf{O}}, 
\end{align}
where $H^*(q)=N_1q+N_2q^2+\dots$ with $N_1, N_2,\dots$ as convergence parameters and $q\in[0,1],$ whenever $\eta\in D^*$.\\
In above equations, $D^*$, $\partial D^*$, ${\bf{A}}$, ${\bf{B}}$, ${\bf {I}}$ signify  modified domain, boundary of domain, a linear operator,  a non linear operator and boundary operator respectively, whereas  ${\bf {G}}_0$, ${\bf {F}}$, ${\bf {G}}$ present respectively initial guess, an unknown function and  known analytic function. The mentioned symbols have following matrix presentations: ${\bf{A}}={[A_{jj}]}_{m\times m}$, ${\bf{B}}={[B_{ij}]}_{m\times m}$, ${\bf {F}}={[f_{j}]}_{m\times 1}$, ${\bf {G}}={[g_{j}]}_{m\times 1}$, ${\bf {I}}={[I_{j}]}_{m\times 1}$, ${\pmb{\gamma}}={[\gamma_{j}]}_{m\times 1}$, ${\bf {G}}_0={[{g^0_j}]}_{m\times 1}$ and ${\bf {O}}={[0]}_{m\times 1}$ for $i,j=1(1)m$.\\
Here, homotopy equation (\ref{Oham2}) is returned to original form (equation (\ref{oe})) when $q=1$, and gives initial guess when when $q=0.$\\
Following OHAM, we take $s^{\text{th}}$-order approximation for ${\bf {F}}$ as
\begin{align}\label{ass}
{\bf {\overline{F}}}&={\bf{F}}_0(\eta)+q{\bf{F}}_1(\eta,N_1)+q^2{\bf{F}}_2(\eta,N_1,N_2)+\dots+q^s{\bf{F}}_s(\eta,N_1,N_2,\dots,N_s),
\end{align}
where ${\bf{\overline{F}}}={[\overline{f_{j}}]}_{m\times 1}$, ${\bf{F}}_0={[f_{0j}]}_{m\times 1}$, ${\bf{F}}_1={[f_{1j}]}_{m\times 1}$, \dots , ${\bf{F}}_s={[f_{sj}]}_{m\times 1}.$

Utilizing assumption (\ref{ass}) in (\ref{oe}) and (\ref{oe1}), the comparison of the coefficients of common powers of $q$ help to obtain the final approximate solution which involve convergence parameters as:
\begin{align}\label{seriessol}
{\bf {\overline{F}}}&=\lim_{q\to 1}\left({\bf{F}}_0(\eta)+q{\bf{F}}_1(\eta,N_1)+q^2{\bf{F}}_2(\eta,N_1,N_2)+\dots+q^s{\bf{F}}_s(\eta,N_1,N_2,\dots,N_s)+\dots\right).
\end{align}
For the determination of optimum values for convergence parameters, residual function is taken as 
\begin{align}
\mathbf{R}(\eta,N_1,N_2,\dots,N_s)={\bf{A}}{\bf {\overline{F}}}+{\bf{B}}{\bf{\overline{F}}}+{\bf {G}}
\end{align}
and corresponding functional as
\begin{align} 
{\mathbb{J}}(N_1,N_2,\dots,N_s)=\int^\infty_0 \sum^{m}_{j=1}R_j^2(\eta,N_1,N_2,\dots,N_s)d\eta,
\end{align}
where $\mathbf{R}$=$[R_1,R_2,\dots,R_m]^T$.\\
Using least square guidlines, convergence parameters $N_1,N_2,\dots,N_s$ can be  resolved through following set of equations:
\begin{align}
\dfrac{\partial {\mathbb{J}}}{\partial N_1}=\dfrac{\partial {\mathbb{J}}}{\partial N_2}=\dots=\dfrac{\partial {\mathbb{J}}}{\partial N_s}=0.
\end{align}

\subsection{Convergence of results through OHAM}
\begin{thm} 
Let ${\bf{F^*}}=\left\{{\bf{H}}_0,{\bf{H}}_1,{\bf{H}}_2, \dots,{\bf{H}}_s, \dots \right\}$ be the sequence of vector valued functions in Hilbert space where ${\bf{H}}_0={\bf{F}}_0(\eta)$, ${\bf{H}}_1={\bf{F}}_0(\eta)+{\bf{F}}_1(\eta,N_1)$, $\dots$, ${\bf{H}}_s={\bf{F}}_0(\eta)+{\bf{F}}_1(\eta,N_1)+.....+{\bf{F}}_s(\eta,N_1,N_2,...N_s).$ Then the series solution (\ref{seriessol}) converges if $\exists$ a constant $\delta$ $(0<\delta<1)$ such that 
\begin{align}
{\bf{F}}_{s+1}(\eta,N_1,N_2,\dots,N_{s+1})\leq \delta {\bf{F}}_s(\eta,N_1,N_2,\dots,N_s) \quad \forall \quad s\geq m\quad \text{for some}\quad s,m\in \mathbb{N} 
\end{align}
\end{thm}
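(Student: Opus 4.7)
The plan is to show that the sequence of partial sums $\{{\bf{H}}_s\}$ is Cauchy in the underlying Hilbert space; since Hilbert spaces are complete, this immediately gives convergence of the series (\ref{seriessol}). The crucial observation is that, from the definitions in the statement, ${\bf{H}}_{s+1}-{\bf{H}}_s={\bf{F}}_{s+1}(\eta,N_1,\dots,N_{s+1})$, so the hypothesised ratio bound controls exactly the increments of $\{{\bf{H}}_s\}$.

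First I would read the inequality in the hypothesis as a norm inequality (the only sensible reading for vector-valued functions in a Hilbert space). Iterating $\|{\bf{F}}_{s+1}\|\leq \delta\|{\bf{F}}_s\|$ from the threshold index $m$ yields the geometric estimate
\begin{align*}
\|{\bf{F}}_{s+1}\|\leq \delta^{s+1-m}\|{\bf{F}}_m\| \qquad \text{for all } s\geq m.
\end{align*}
Then, for arbitrary integers $n>s\geq m$, telescoping together with the triangle inequality gives
\begin{align*}
\|{\bf{H}}_n-{\bf{H}}_s\|\leq \sum_{k=s}^{n-1}\|{\bf{F}}_{k+1}\|\leq \|{\bf{F}}_m\|\sum_{k=s}^{n-1}\delta^{k+1-m}\leq \frac{\delta^{s+1-m}}{1-\delta}\|{\bf{F}}_m\|.
\end{align*}
Because $0<\delta<1$, the right-hand side tends to $0$ as $s\to\infty$, uniformly in $n$, so $\{{\bf{H}}_s\}$ is Cauchy. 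Completeness delivers a limit ${\bf{H}}^{*}$, which by construction equals the sum in (\ref{seriessol}), proving convergence.

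No step here is really an obstacle: the argument is the standard geometric-ratio Cauchy comparison. The only points that demand a bit of care are the interpretation of the hypothesis as a norm inequality (without which the statement is ill-posed) and the handling of the finitely many early increments with index below $m$, which can be absorbed into a harmless constant that does not affect the Cauchy criterion. If desired, one could additionally verify that the limit ${\bf{H}}^{*}$ solves the original system (\ref{oe}) by passing to the limit in the homotopy equation (\ref{Oham2}) at $q=1$, but that lies beyond what the stated theorem requires.
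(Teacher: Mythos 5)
Your proposal is correct and follows essentially the same route as the paper: interpret the hypothesis as a norm bound, iterate it to get a geometric estimate on the increments ${\bf{H}}_{s+1}-{\bf{H}}_s={\bf{F}}_{s+1}$, telescope with the triangle inequality to show the partial sums form a Cauchy sequence, and invoke completeness of the Hilbert space. Your write-up is in fact slightly cleaner, since you state the norm interpretation explicitly and carry the correct factor $1/(1-\delta)$ where the paper's displayed bound has $\delta-1$ in the denominator.
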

\begin{proof}
Consider 
\begin{align*}
\rVert{{\bf{H}}_{s+1}}-{\bf{H}}_s\rVert&=\rVert {\bf{F}}_{s+1}(\eta,N_1,N_2,\dots,N_{s+1}), \rVert\\&
\leq \delta\rVert {\bf{F}}_s(\eta,N_1,N_2,\dots,N_s) \rVert\\& \leq \delta^2\rVert {\bf{F}}_{s-1}(\eta,N_1,N_2,\dots,N_{s-1}) \rVert\\
&\vdots\\&
\leq \delta^{s-m+1}\rVert {\bf{F}}_m(\eta,N_1,N_2,\dots,N_m) \rVert 
\end{align*}
Now $\forall$ \hspace{0.2cm} $s,m\in \mathbb{N}$ and $s>s^*>m,$
\begin{align}\nonumber
\rVert {\bf{H}}_{s}-{\bf{H}}_{s^*}  \rVert&=\rVert ({\bf{H}}_s(\eta)-{\bf{H}}_{s-1}(\eta))+({\bf{F}}_{s-1}(\eta)-{\bf{H}}_{s-2}(\eta))\\
&+({\bf{H}}_{s-2}(\eta)-{\bf{H}}_{s-3}(\eta))+\dots+({\bf{H}}_{s^*+1}(\eta)-{\bf{H}}_{s^*}(\eta)) \rVert\\\nonumber
 &\leq \rVert ({\bf{H}}_s(\eta)-{\bf{H}}_{s-1}(\eta))\rVert+\rVert({\bf{H}}_{s-1}(\eta)-{\bf{H}}_{s-2}(\eta))\rVert\\&\nonumber+\rVert({\bf{H}}_{s-2}(\eta)-{\bf{H}}_{s-3}(\eta))\rVert+\dots+\rVert({\bf{H}}_{s^*+1}(\eta)-{\bf{H}}_{s^*}(\eta)) \rVert\\\nonumber
& \leq \delta^{s-m}\rVert {\bf{F}}_m(\eta,N_1,N_2,\dots,N_m)\rVert+\delta^{s-m-1}\rVert {\bf{F}}_m(\eta,N_1,N_2,\dots,N_m)\rVert\\&\nonumber+\delta^{s-m-2}\rVert {\bf{F}}_m(\eta,N_1,N_2,\dots,N_m)\rVert+\dots+\delta^{s^*-m+1}\rVert {\bf{F}}_m(\eta,N_1,N_2,\dots,N_m) \rVert
\end{align}
\begin{align}\nonumber
\implies \rVert {\bf{H}}_s-{\bf{H}}_{s^*}  \rVert &\leq \left[\delta^{s-m}+\delta^{s-m-1}+\delta^{s-m-2}+\dots+\delta^{s^*-m+1}\right]\rVert {\bf{F}}_m(\eta,N_1,N_2,\dots,N_m) \rVert\\&
 = \dfrac{\delta^{s^*-m+2}}{\delta-1} \rVert {\bf{F}}_m(\eta,N_1,N_2,\dots,N_m) \rVert
\end{align}
\begin{align}
\implies \lim_{s^*\to\infty} \rVert {\bf{H}}_s-{\bf{H}}_{s^*}  \rVert \leq \lim_{s^*\to\infty}\dfrac{\delta^{s^*-m+2}}{\delta-1}\rVert {\bf{F}}_m(\eta,N_1,N_2,\dots,N_m) \rVert \rightarrow 0,\hspace{0.2cm}\because \hspace{0.2cm}0<\delta<1
\end{align}
Thus ${\bf{F^*}}$ is a Cauchy sequence in Hilbert space and hence the result.
\end{proof}
\begin{thm}
If the series (\ref{seriessol}) converges then ${\bf{\overline{F}}}$ will represent the exact solution.
\end{thm}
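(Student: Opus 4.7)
The plan is to exploit the defining property of the OHAM coefficients. By construction, substituting the ansatz (\ref{ass}) into the homotopy equation (\ref{Oham2}) and equating coefficients of like powers of $q$ forces each ${\bf{F}}_s$ to satisfy the order-$s$ equation in the resulting hierarchy. Consequently, the identity
\begin{align*}
(1-q)[{\bf{A}}({\bf{F}}(\eta,q))+{\bf{G}}_0]-H^*(q)[{\bf{B}}({\bf{F}}(\eta,q))+{\bf{A}}({\bf{F}}(\eta,q))+{\bf{G}}]={\bf{O}}
\end{align*}
holds in the power-series sense whenever ${\bf{F}}(\eta,q)=\sum_{s=0}^\infty q^s{\bf{F}}_s$. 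My task then reduces to passing to the limit $q\to 1$ and reading off the original system (\ref{oe}).

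First, I would invoke the hypothesis that the series (\ref{seriessol}) converges in the Hilbert space introduced in the preceding theorem, so that the partial sums ${\bf{H}}_s$ converge to ${\bf{\overline{F}}}$ and, by telescoping, ${\bf{F}}_s\to{\bf{O}}$; this also identifies ${\bf{F}}(\eta,1)={\bf{\overline{F}}}$. Next, I would evaluate the displayed identity along the sequence ${\bf{H}}_s$ and send $s\to\infty$. Linearity and continuity of ${\bf{A}}$ give ${\bf{A}}({\bf{H}}_s)\to{\bf{A}}({\bf{\overline{F}}})$; the initial guess ${\bf{G}}_0$ and the known vector ${\bf{G}}$ pass through trivially; and continuity of ${\bf{B}}$ in the Hilbert norm yields ${\bf{B}}({\bf{H}}_s)\to{\bf{B}}({\bf{\overline{F}}})$. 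Setting $q=1$ annihilates the bracket $[{\bf{A}}({\bf{\overline{F}}})+{\bf{G}}_0]$ through the factor $(1-q)$, and what remains is
\begin{align*}
-H^*(1)\bigl[{\bf{A}}({\bf{\overline{F}}})+{\bf{B}}({\bf{\overline{F}}})+{\bf{G}}\bigr]={\bf{O}}.
\end{align*}
Under the standing OHAM requirement $H^*(1)=\sum_k N_k\neq 0$, I conclude ${\bf{A}}({\bf{\overline{F}}})+{\bf{B}}({\bf{\overline{F}}})+{\bf{G}}={\bf{O}}$, which is precisely (\ref{oe}). The boundary conditions (\ref{oe1}) are inherited because in the hierarchy ${\bf{F}}_0$ is chosen to satisfy the full inhomogeneous data while each correction ${\bf{F}}_s$ for $s\geq 1$ carries homogeneous boundary values, so ${\bf{I}}$ applied to ${\bf{\overline{F}}}$ reproduces ${\pmb{\gamma}}$.

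The principal obstacle is justifying the exchange of the infinite sum with the nonlinear operator ${\bf{B}}$. Linearity alone handles ${\bf{A}}$, but ${\bf{B}}$ in our concrete setting contains polynomial products of the unknowns and their derivatives---for instance $F_1'F_2'$, $F_1'^2$, $\Theta'\Phi'$, $(F_1+F_2)F_i''$---so one needs continuity of ${\bf{B}}$ in a Sobolev-type norm that controls the functions together with their derivatives up to order two. A fully rigorous proof would therefore either postulate this continuity as an abstract hypothesis on ${\bf{B}}$ or explicitly bound $\|{\bf{B}}({\bf{H}}_s)-{\bf{B}}({\bf{\overline{F}}})\|$ in terms of $\|{\bf{H}}_s-{\bf{\overline{F}}}\|$, leveraging the geometric decay $\delta^{s-m}$ established in the preceding theorem.
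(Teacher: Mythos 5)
Your argument is correct and rests on the same underlying mechanism as the paper's proof --- both exploit the coefficient hierarchy obtained by substituting (\ref{ass}) and (\ref{BF}) into the homotopy equation (\ref{Oham2}), together with the fact that convergence of (\ref{seriessol}) forces ${\bf{F}}_s\to{\bf{O}}$ --- but you close the argument differently, and arguably more honestly. The paper sums the order-$s$ relations (\ref{kthapproximation}) over all $s$, telescopes under the linear operator ${\bf{A}}$ via (\ref{imp}), and then concludes by appealing to the convergence parameters being ``chosen in a way that residual becomes zero''; that final clause is an extra assumption rather than a deduction, so the paper never actually derives ${\bf{A}}{\bf{\overline{F}}}+{\bf{B}}{\bf{\overline{F}}}+{\bf {G}}={\bf{O}}$ from convergence alone. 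You instead evaluate the homotopy identity at $q=1$, let the factor $(1-q)$ annihilate the auxiliary linear part, and divide by $H^*(1)=\sum_k N_k$, which makes the genuinely needed hypothesis $H^*(1)\neq 0$ explicit and turns the conclusion into an actual implication. The analytic price is the same in both versions: one must justify interchanging the infinite sum with ${\bf{A}}$ and, more delicately, with the nonlinear operator ${\bf{B}}$ (equivalently, that the expansion (\ref{BF}) still represents ${\bf{B}}{\bf{\overline{F}}}$ when summed at $q=1$). The paper performs these interchanges silently; you flag them and indicate the continuity assumption that would repair them, and you also account for the boundary conditions (\ref{oe1}), which the paper's proof omits entirely.
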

\begin{proof}
Let ${\bf{\overline{F}}}= \mathlarger{\sum^{s}_{j=0}}{\bf{{F}}}_j(\eta,N_1,N_2,\dots,N_j)$ be the partial sum of series (\ref{seriessol}).\\
Since series (\ref{seriessol}) converges,therefore, $\mathlarger{\lim_{s\to\infty}}{\bf{{F}}}_s(\eta,N_1,N_2,\dots,N_s)$=0.\\
Now, let us take the expansion of ${\bf{{BF}}}$ around imbedding parameter $q$ as
\begin{align}\label{BF}
{\bf{{BF}}}={\bf{B}}_0({\bf{F}}_0)+\sum^{\infty}_{s=1}{\bf{B}}_s({\bf{F}}_0,{\bf{F}}_1,\dots,{\bf{F}}_s)q^s
\end{align}
Substituting assumptions (\ref{ass}) and (\ref{BF}) in equation (\ref{Oham2}), the coefficients of the identical exponents of $q$ on comparison gives:\\
Coefficients of $q^0:$ 
 \begin{align}\label{initial}
 {\bf{A}}({\bf{F}}_0)+{\bf{G}}=0   
 \end{align}
Coefficients of $q^1:$ 
 \begin{align} \label{1st}
  {\bf{A}}({\bf{F}}_1)=N_1{\bf{B}}_0({\bf{F}}_0)
\end{align}
Coefficients of $q^2:$
\begin{align}
{\bf{A}}({\bf{F}}_2)-{\bf{A}}({\bf{F}}_1) =N_2{\bf{B}}_0{\bf{F}}_0+N_1[{\bf{A}}({\bf{F}}_1)+{\bf{B}}_1({\bf{F}}_0,{\bf{F}}_1)]
\end{align} 
Similarly, coefficients of $q^s:$ 
 \begin{align}\label{kthapproximation}
 {\bf{A}}({\bf{F}}_s)-{\bf{A}}({\bf{F}}_{s-1})&=N_s{\bf {B}}_0({\bf{F}}_0)+\sum^{s-1}_{j=1}N_j[{\bf{A}}({\bf{F}}_{s-j})+{\bf{B}}_{s-j}({\bf{F}}_0,\dots,{\bf{ F}}_{s-j})] 
 \end{align}
 Since
 \begin{align} 
{\bf{F}}_0+\sum^{s}_{j=1}{\bf{F}}_j-\sum^{s}_{j=1}{\bf{F}}_{j-1}={\bf{F}}_s,
 \end{align}
 therefore,
 \begin{align} 
\lim_{s\to\infty} \left\{{\bf{F}}_0+\sum^{s}_{j=1}{\bf{F}}_j-\sum^{s}_{j=1}{\bf{F}}_{j-1}\right\}=\lim_{s\to\infty}{\bf{F}}_s=0. 
 \end{align}
  Applying linear operator ${\bf{A}}$, we get
\begin{align} \label{imp}
{\bf{A}}({\bf{F}}_0)+{\bf{A}}\sum^{\infty}_{j=1}{\bf{F}}_j-{\bf{A}}\sum^{\infty}_{j=1}{\bf{F}}_{j-1}=0.
 \end{align}
Utilizing   (\ref{kthapproximation}) in (\ref{imp}), we get 
\begin{align} \nonumber\label{final1}
{\bf{A}}({\bf{F}}_0)+{\bf{A}}\sum^{\infty}_{j=1}{\bf{F}}_j&-{\bf{A}}\sum^{\infty}_{j=1}{\bf{F}}_{j-1}\\
& =\sum^{\infty}_{j=1}\left[\sum^{j-1}_{i=1}N_i[{\bf{A}}{\bf{F}}_{j-i}+{\bf{B}}_{j-i}({\bf{F}}_0,\dots, {\bf{F}}_{j-i})]+N_j{\bf {B}}_0{\bf{F}}_0+{\bf{AF}}_0 \right] =0. 
 \end{align}
Rewriting above equation, 
\begin{align} 
 \sum^{\infty}_{j=1}\left[\sum^{j}_{i=1}N_i[{\bf{A}}{\bf{F}}_{j-i}+{\bf{B}}_{j-i}({\bf{F}}_0,\dots,{\bf{F}}_{j-i})]+{\bf{AF}}_0-N_j{\bf{AF}}_0 \right]=0.
 \end{align}
 If the convergence parameter determined  properly and in chosen in a way that residual becomes zero,  we observe that
 \begin{align}
 {\bf{A}}{\bf{\overline{F}}}+{\bf{B}}{\bf{\overline{F}}}+{\bf {G}}=0, 
 \end{align}
which means that ${\bf{\overline{F}}}$ is the exact solution.
\end{proof}
\subsection{\textbf{Implementation to current problem}} 
Based on previously declared theory, homotopy equations with respect to set of equations (\ref{e1})-(\ref{b1}) are taken as: 
\begin{align}\nonumber\label{homo1}
&(1-q)[F_1''+F_1'-V_r]-(N_1 q+N_2 q^2)\left[F_1'''+(F_1+F_2)F_1''-F_1'(F_1'+F_2')-KF_1'\right.\\&\left.-Fr F_1'^2+V_r\left((2+Fr)V_r+K\right)\right]=0,\\\nonumber
&(1-q)[F_2''+F_2'-V_r]-(N_1 q+N_2 q^2)\left[F_2'''+(F_1+F_2)F_2''-F_2'(F_1'+F_2')-KF_2'\right.\\&\left.-FrF_2'^2+V_r\left((2+Fr)V_r+K\right)\right]=0,\\\nonumber
&(1-q)[\Theta''+\Theta']-(N_1 q+N_2 q^2)[\Theta''+Pr((F_1+F_2)\Theta'+Nb\Theta'\Phi'+Nt\Theta'^2))\\&+{Q_1^*}e^{-\eta}+{Q_2^*}\Theta)]=0, \\
&(1-q)[\Phi''+\Phi']-(N_1 q+N_2 q^2) [\Phi''+Sc(F_1+F_2)\Phi'+\frac{Nt}{Nb}\Theta'']=0
 \end{align}
with modified boundary conditions as 
\begin{equation}\label{homo5}
\left.
\begin{aligned}
 (F_1+F_2, F_1', F_2', \Theta, \Phi)&=\left(0, 1, \lambda, 1, 1 \right) \quad &\text{at} \quad &\eta=0, \\
(F'_1, F_2', \Theta, \Phi)&=(V_r, V_r, 0, 0) \quad &\text{as} \quad &\eta \rightarrow\infty
\end{aligned}
 \right\}.
\end{equation}
Here $(N_1, N_2)$ and $q$ are convergence control and  homotopy parameters respectively. The system of  equations reduces to its original mathematical structure for $q=1$, and for $q=0$ the equations for initial guess are achieved.\\
We will go upto second order of approximation and consider following expansions for $F_1'$, $F_2'$, $\Theta$ and  $\Phi$ as:  
\begin{equation}
\left.
\begin{aligned}\label{heq}
F_1'&=F_{10}'+qF_{11}'+q^2F_{12}', &F_2'&=F_{20}'+qF_{21}'+q^2F_{22}',\\
\Theta&=\Theta_0+q\Theta_1+q^2\Theta_2, &\Phi&=\Phi_0+q\Phi_1+q^2\Phi_2.
\end{aligned}
 \right\}.
\end{equation}
Substitution of these expansions in equations (\ref{homo1})-(\ref{homo5}) and threreafter comparison of coefficients of various identical exponents of $q$ give:\\
{\bf{coefficients of $q^0$:}} 
\begin{align}\label{asol}
 F_{10}''+F_{10}'-V_r=0, \quad F_{20}''+F_{20}'-V_r=0, \quad \Theta_0''+\Theta_0'=0, \quad \Phi_0''+\Phi_0'=0
 \end{align}
with
\begin{align}
&F'_{10}(0)=1, \quad F'_{20}(0)=\lambda, \quad \Theta_0(0)=1, \quad \Phi_0(0)=1, \quad (F_{10}+F_{20})(0)=0,& \nonumber \\
&F'_{10}(\infty)=V_r, \quad F'_{20}(\infty)=V_r, \quad \Theta_0(\infty)=0, \quad \Phi_0(\infty)=0.&
\end{align}
{\bf{coefficients of $q^1$:}}
\begin{align}
F_{11}''+F_{11}'=&(A_1+A_3\eta)e^{-\eta}+A_2e^{-2\eta},
&F_{21}''+F_{21}'=&(B_1+B_3\eta)e^{-\eta}+B_2e^{-2\eta},\\
\Theta_1''+\Theta_1'=&(L_1-2N_1V_rPr\eta)e^{-\eta}+L_2e^{-2\eta},
&\Phi_1''+\Phi_1=&(S_1-2N_1V_rSc\eta)e^{-\eta}+S_2e^{-2\eta}. 
\end{align}
with 
\begin{align}
&F_{11}'(0)=0, \quad F_{21}'(0)=0, \quad \Theta_1(0)=0, \quad \Phi_1(0)=0, \quad (F_{11}+F_{21})(0)=0,& \nonumber \\
&F_{11}'(\infty)=0, \quad F_{21}'(\infty)=0,\quad \Theta_1(\infty)=0, \quad \Phi_1(\infty)=0.& 
\end{align}
{\bf{coefficients of $q^2$:}}
 \begin{align}
 &F_{12}''+F_{12}'=(A_4+A_7\eta+A_9\eta^2-N_1V_rA_3\eta^3)e^{-\eta}+(A_5+A_8\eta+A_{10}\eta^2)e^{-2\eta}+A_6e^{-3\eta},
\end{align}
 \begin{align}
  &F_{22}''+F_{22}'=(B_4+B_7\eta+B_9\eta^2-N_1V_rB_3\eta^3)e^{-\eta}+(B_5+B_8\eta+B_{10}\eta^2)e^{-2\eta}+B_6e^{-3\eta},
 \end{align}
 \begin{align}
 &\Theta_2''+\Theta_2=(L_3+L_6\eta+L_8\eta^2+L_{10}\eta^3)e^{-\eta}+(L_4+L_7\eta+L_{9}\eta^2)e^{-2\eta}+L_5e^{-3\eta},
 \end{align}
 \begin{align}
 &\Phi_2''+\Phi_2=(S_3+S_6\eta+S_8\eta^2+S_{10}\eta^3)e^{-\eta}+(S_4+S_7\eta+S_{9}\eta^2)e^{-2\eta}+S_5e^{-3\eta}
 \end{align}
with
\begin{align}\label{bsol}
&F_{12}'(0)=0, \quad F_{22}'(0)=0, \quad \Theta_2(0)=0, \quad \Phi_2(0)=0, \quad (F_{12}+F_{22})(0)=0,& \nonumber \\
&F_{12}'(\infty)=0, \quad F_{22}'(\infty)=0,\quad \Theta_2(\infty)=0, \quad \Phi_2(\infty)=0.&
\end{align}
The solutions to above set of equations (\ref{asol})-(\ref{bsol}) is obtained as:
\begin{align}
F_{10}'&=(1-V_r)e^{-\eta}+V_r,&\\
F_{20}'&=(\lambda_1-V_r)e^{-\eta}+V_r,&\\
F_{11}'&=\left(A_2+A_1\eta+\dfrac{A_3}{2}\eta^2\right)e^{-\eta}-A_2e^{-2\eta},&\\
F_{21}'&=\left(B_2+B_1\eta+\dfrac{B_3}{2}\eta^2\right)e^{-\eta}-B_2e^{-2\eta},&\\\nonumber
F_{12}'&=\left(A_5+A_8+2A_{10}+\dfrac{A_6}{2}+A_4\eta+\dfrac{A_7}{2}\eta^2+\dfrac{A_9}{3}\eta^3-\dfrac{N_1V_rA_3}{4}\eta^4\right)e^{-\eta}\\&-[A_5+(A_8+2A_{10})(1+\eta)+A_{10}\eta^2]e^{-2\eta}-\dfrac{A_6}{2}e^{-3\eta},&\\\nonumber
F_{22}'&=\left(B_5+B_8+2B_{10}+\dfrac{B_6}{2}+B_4\eta+\dfrac{B_7}{2}\eta^2+\dfrac{B_9}{3}\eta^3-\dfrac{N_1V_rB_3}{4}\eta^4\right)e^{-\eta}\\&-[B_5+(B_8+2B_{10})(1+\eta)+B_{10}\eta^2]e^{-2\eta}-\dfrac{B_6}{2}e^{-3\eta},&\\
\Theta_0&=e^{-\eta},&\\
\Theta_1&=\left[-\dfrac{L_2}{2}+(2N_1PrV_r-L_1)\eta+N_1PrV_r\eta^2\right]e^{-\eta}+\dfrac{L_2}{2}e^{-2\eta},&\\\nonumber
\Theta_2&=-\left[\dfrac{L_4}{2}+\dfrac{3L_7}{4}+\dfrac{7L_9}{4}+\dfrac{L_5}{6}+(L_3+L_6+2L_8+6L_{10})\eta+\left(\dfrac{L_6}{2}+L_8+3L_{10}\right)\eta^2\right.\\\nonumber&\left.+\left(\dfrac{L_8}{3}+L_{10}\right)\eta^3+\dfrac{L_{10}}{4}\eta^4\right]e^{-\eta}+\left[\dfrac{L_4}{2}+\dfrac{3L_7}{4}+\dfrac{7L_9}{4}+\left(\dfrac{L_7}{2}+\dfrac{3L_9}{2}\right)\eta+\dfrac{L_9}{2}\eta^2\right]e^{-2\eta}\\&+\dfrac{L_5}{6}e^{-3\eta},&\\
\Phi_0&=e^{-\eta},&\\
\Phi_1&=\left[-\dfrac{S_2}{2}+(2N_1PrV_r-S_1)\eta+N_1PrV_r\eta^2\right]e^{-\eta}+\dfrac{S_2}{2}e^{-2\eta},&\\\nonumber
\Phi_2&=-\left[\dfrac{S_4}{2}+\dfrac{3S_7}{4}+\dfrac{7S_9}{4}+\dfrac{S_5}{6}+(S_3+S_6+2S_8+6S_{10})\eta+\left(\dfrac{S_6}{2}+S_8+3S_{10}\right)\eta^2\right.\\\nonumber&\left.+\left(\dfrac{S_8}{3}+S_{10}\right)\eta^3+\dfrac{S_{10}}{4}\eta^4\right]e^{-\eta}+\left[\dfrac{S_4}{2}+\dfrac{3S_7}{4}+\dfrac{7S_9}{4}+\left(\dfrac{S_7}{2}+\dfrac{3S_9}{2}\right)\eta+\dfrac{S_9}{2}\eta^2\right]e^{-2\eta}\\&+\dfrac{S_5}{6}e^{-3\eta}.
 \end{align}
These solutions when substituted in equations (\ref{heq}) will provide the final approximate solution (after letting $q=1$) as:
 \begin{align}
 F_1'=\sum^{2}_{j=0}F_{1j}',\quad \quad  F_2'=\sum^{2}_{j=0}F_{2j}', \quad \quad \Theta=\sum^{2}_{j=0}\Theta_j,\quad \quad \Phi=\sum^{2}_{j=0}\Phi_j.
 \end{align}
Different coefficients which appeared above are not reported here to protect the space.
\section{Physical exploration of results}
The convergent solutions determined in previous section are physically explored for different parameters like $\lambda$ (velocity ratio due to stretching), $V_r$ (velocity ratio due to stagnation and stretching), $K$ (permeability parameter), $Fr$ (Forchheimer parameter), $Q_1^*$ or $Q_2^*$ (heat source/sink), $Sc$ (Schmidt number), $Nb$ (Brownian motion) and  $Nt$ (thermophoresis) through tabulation and plotting. Convergence controlling parameters (determined through OHAM) which keep residual functions at their minimum are established as $N_1 = 0.04395191567136332$, $N_2  = -0.21630125455453278$.\\

Figures (\ref{fdeshwithVr}) and (\ref{gdeshwithVr}) explain the significance of velocity ratio $V_r$ on velocity components $F_1'$ and $F_2'$. When $V_r>1$, profiles of $F_1'$ and $F_2'$ are parabolically similar and approach boundary conditions (at free stream) in asymptotic sense. The similar structure of $F_1'$ and $F_2'$ is maintained by the presence of strong similar stagnation forces in two lateral directions.  However when $V_r\le1$, $F_1'$ profiles shape are inverted because of the dominance of different but vigorous stretching forces in two directions in comparison to weak free stream flow strength. \\

Figures (\ref{fgdeshwithl1})-(\ref{fgdeshwithl1Vr}) bring out the influences of stretching velocity ratio parameter $\lambda$ on $F_1'$ and $F_2'$ (nanofluid velocities) by taking two values of $V_r$ $(=0.5, 1.5).$ When $V_r=0.5$, both components of nanofluid velocity $(F_1', F_2')$ are lifted up with the increasing parameter $\lambda$ in a region nearer to sheet. However away from the sheet, this pattern is reversed for component $F_2'$ (Fig. \ref{fgdeshwithl1}). Here, primary velocity is accelerated with stretching ratio in the entire boundary layer due to the dominance of primary stretching. Fig. \ref{fgdeshwithl1Vr} also portrays increasing profiles of $F_1'$ and $F_2'$ with $\lambda$ near to sheet when $V_r=1.5$, but trend is reversed for both components $F_1'$ and $F_2'$ away from the sheet. \\

Figures (\ref{fgdeshwithFr})-(\ref{fgdeshwithKVr}) illustrate the variability of $F_1'$ and $F_2'$ with Forchheimer factor $Fr$ and permeability parameter $K$ for two case of $V_r$. In the presence of vigorous free stream strengths (i.e. $V_r=1.5$), nanofluid velocities in boundary layer are reduced with $Fr$. This is natural to occur here because form-drag coefficient due to solid obstacles is comparable to surface drag due to friction. Since $Fr$ has direct association with form-drag coefficient therefore, as $Fr$ is augmented, velocity should be depressed.\\

 For weakened free stream strengths in relation to stretching forces (i.e. $V_r=0.5$), $F_2'$ is hampered with $Fr$ but $F_1'$ is augmented with it. In this case, free stream velocity is lower than primary stretching velocity which itself is higher than stretching forces in secondary direction. Due to these nanofluid velocity relations, opposite influence of $Fr$ is noticed on $F_1'$ (that is,  $F_1'$ is not a decreasing function of $Fr$), but $Fr$ positively influence $F_2'$. Therefore $F_1'$ should be raised but $F_2'$ should diminish with $Fr$. Same is happening in present case. Figures (\ref{fgdeshwithK})-(\ref{fgdeshwithKVr}) depict the similar trends with $K$ as those with $Fr$ but here overshoot points are marginally smaller.  \\
Figures (\ref{thetawithNt})-(\ref{thetawithNbVr}) illuminate the significance of $Nt$ and $Nb$ on the nanoparticle temperature $\Theta (\eta)$ variations under heat source $(Q_1^*>0$, $Q_2^*>0)$ and heat sink $(Q_1^*<0$, $Q_2^*<0)$ considerations in the flow field. Temperature of nanoparticle volume fraction is enhanced with $Nb$ and $Nt$ for all the restrictions of heat source/sink and stagnation flow strengths. Here negative temperature profiles are detected as per inverted Boltzmann transport. Similar profiles have also been seen in a particular three dimensional transport of nanofluid analyzed by Kumar and Sood \cite{kumarsood2017} .  \\

Figures (\ref{phiwithNt})-(\ref{phiwithSc}) exhibit the influence of $Nt$, $Nb$ and $Sc$ on nanoparticle concentration distribution. Distribution of nanoparticle concentration is lifted up with $Nt$ and $Sc$ but depressed with $Nb$ under different ratios of free stream and stretching velocities. since higher values of $Sc$ indicates the higher momentum diffusion than mass diffusion, therefore, results are physically realizable.\\
 
In tables (\ref{tb1}) and (\ref{tb2}) effects of permeability parameter ($K$), Forchheimer factor ($Fr$), stretching velocity ratio ($\lambda$) and heat source/sink parameter ($Q_1^*$ and $Q_2^*$) on local skin-friction rates $(F_1''(0), F_2''(0))$, local Nusselt number $(\Theta'(0))$ and local Sherwood number $(\Phi'(0))$ are examined under different velocity ratios (stagnation to stretching)  considering $V_r<1$ and $V_r>1$. When $V_r<1$, $K$, $Fr$ and $\lambda$ hamper both skin-friction coefficients, but no effect of heat source has been seen on these coefficients. These observations are physically correct because stretching forces accelerate the nanofluid velocity which means skin-friction factor should reduce. Table (\ref{tb1}) also show that Nusselt number is curtailed with $K$, $Fr$ and $\lambda$, but opposite influence on Sherwood number has been realized with these parameters. Heat source also cutback $\Theta'(0)$ and $\Phi'(0)$.\\

On other side when $V_r>1$, magnitude of  $F_1''(0)$ is raised with $K$ and $Fr$, but $F_2''(0)$ is declined with these parameters (Table (\ref{tb2})). However both components of skin-friction are diminished with $\lambda$. Further, $\Theta'(0)$ is incremented with $K$ but $\Phi'(0)$ declined with it, and reverse effect is detected with $\lambda$. Both heat and mass transport rates are depressed with heat source, and realize opposite effect with $Fr$.\\
 
 Tables (\ref{tb3})-(\ref{tb4}) highlight the effects of $Nt$, $Nb$ and $Sc$ on the magnitude of local numbers $\Theta'(0)$ and $\Phi'(0)$ considering different velocity ratio aspects ($V_r<1$ or $V_r>1$) and heat source/sink impact. It is inferred that both $|\Theta'(0)|$ and $|\Phi'(0)|$ are decreasing functions of $Nt$ and $Sc$ under heat sink environment, however local Nusselt number is elevated and local Sherwood number is depressed under heat absorption $(Q_1^*=Q_2^*<0)$. Similar nature of $|\Theta'(0)|$ and $|\Phi'(0)|$ has also been observed for heat generations $(Q_1^*=Q_2^*>0)$. The pattern of these rates is invariant under velocity ratio ($V_r$) variations. The interesting part is the higher heat transfer rate for $V_r>1$ than other case when $V_r<1$, but reverse is the phenomenon for mass transfer rate. Thus here we interpret that a heat sink device is required to enhance the heat transfer rate.
\section{Conclusion}
Three dimensional stagnation-point flow of nanofluid over a  stretching sheet has been examined via OHAM considering Darcy-Forchheimer medium and inconsistent heat source/sink. Mathematical theory of OHAM is revisited, and a unique presentation is provided for the handling of vector differential equation set up. Necessary theorems for convergence of results are also given. Based on achieved results, the concluding remarks are: 
\begin{itemize}
\item{Under weak free stream velocity, Forchheimer factor negatively influence primary nanofluid velocity. But for strong free stream velocity, primary and secondary nanofluid velocities behave in a positive way as per Forchheimer factor physics.}
\item{For the case $V_r>1$, nanofluid velocity decreased with $K$ and $Fr$. Same trend is followed by the velocity field in case of $V_r<1$ except that the primary velocity increases with $K$ and $Fr$.}
\item{Double behavior of nanofluid velocities is noted with stretching ratio parameter when stagnation flow strength overcome surface stretching forces.}
\item{Heat transport rate is higher when stagnation flow strength dominates stretching forces. Reverse is the case with mass transport rate.}
\item{Heat transport rate is depressed with permeability parameter and Forchheimer coefficient, and this rate can be enhanced by introducing heat sink in the flow field.}
\end{itemize}
\clearpage
\begin{figure}[!t]
\makebox[\textwidth][c]{\includegraphics[width=1.5\textwidth]{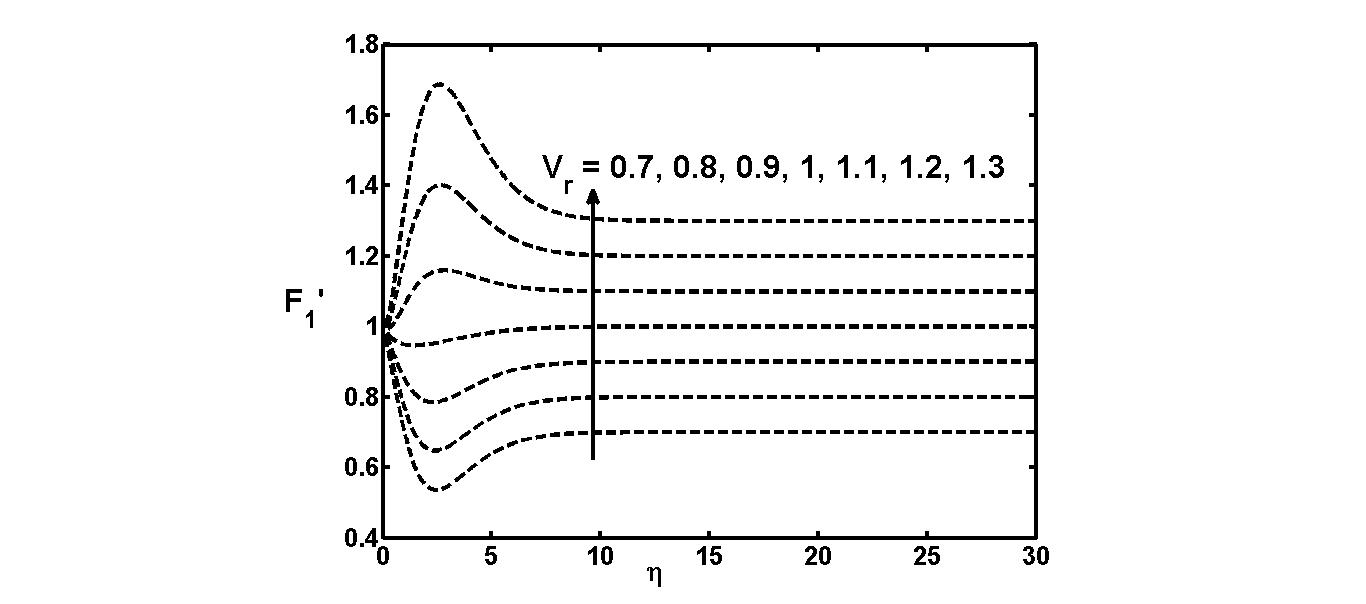}}%
\caption{Variability of $F_1'$ with $V_r$  for $\lambda=0.1$, $Fr=1$,  $K=1$, $Q_1^*=Q_2^*=1$, $Nt=0.2$, $Nb=0.2$, $ Sc=1$, $Pr=6.2$.}\label{fdeshwithVr}
\end{figure}
\begin{figure}[!t]
\makebox[\textwidth][c]{\includegraphics[width=1.5\textwidth]{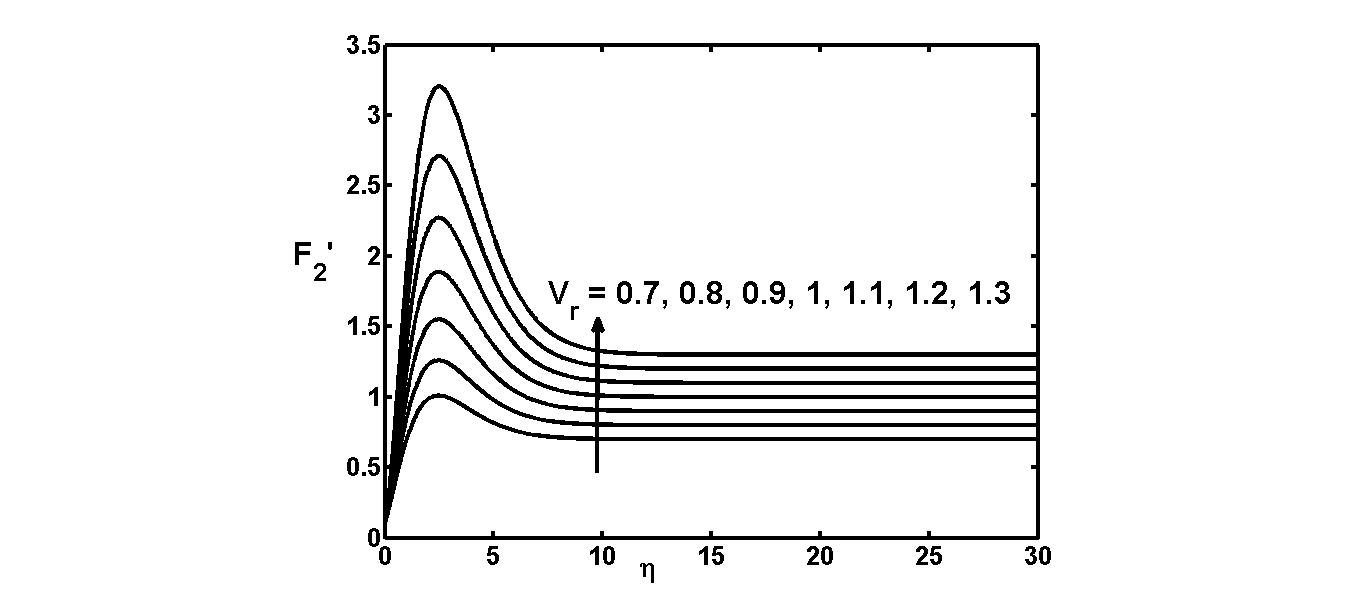}}%
\caption{Variability of $F_2'$ with $V_r$  for $\lambda=0.1$, $Fr=1$,  $K=1$, $Q_1^*=Q_2^*=1$, $Nt=0.2$, $Nb=0.2$, $ Sc=1$, $Pr=6.2$.}\label{gdeshwithVr}
\end{figure}
\begin{figure}[!t]
\makebox[\textwidth][c]{\includegraphics[width=1.5\textwidth]{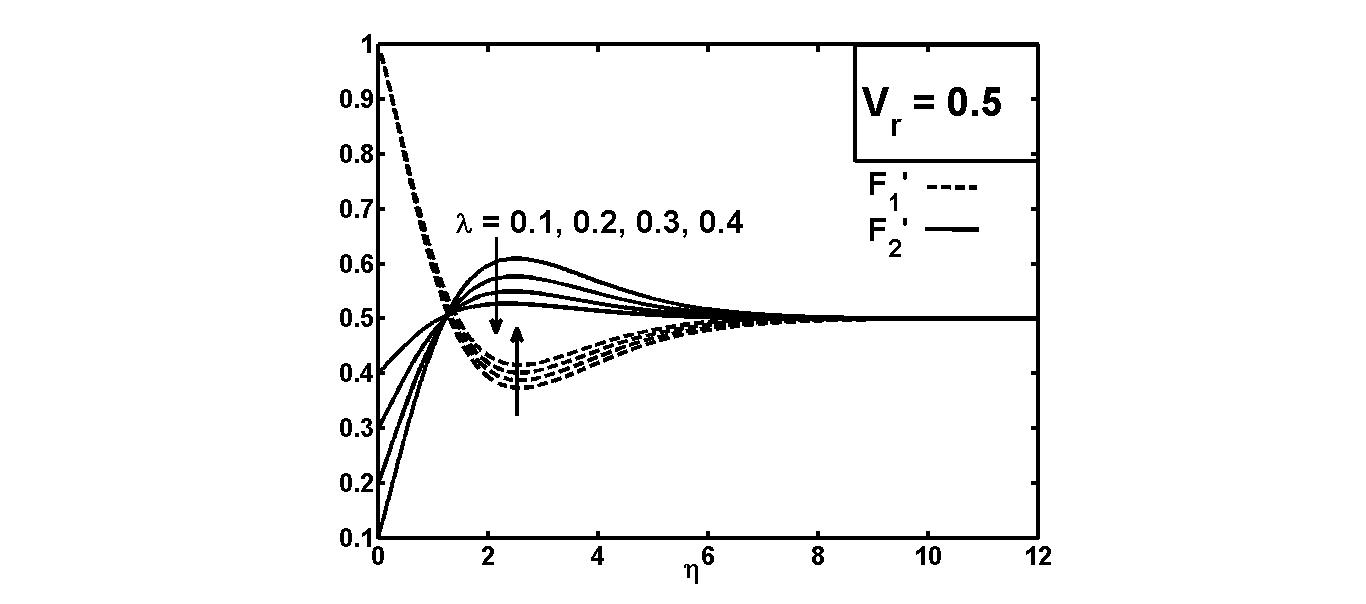}}%
\caption{Variability of $F_1'$ and $F_2'$ with $\lambda$  for $Fr=1$,  $K=1$, $Q_1^*=Q_2^*=1$, $Nt=0.2$, $Nb=0.2$, $ Sc=1$, $Pr=6.2$.}\label{fgdeshwithl1}
\end{figure}
\begin{figure}[!t]
\makebox[\textwidth][c]{\includegraphics[width=1.5\textwidth]{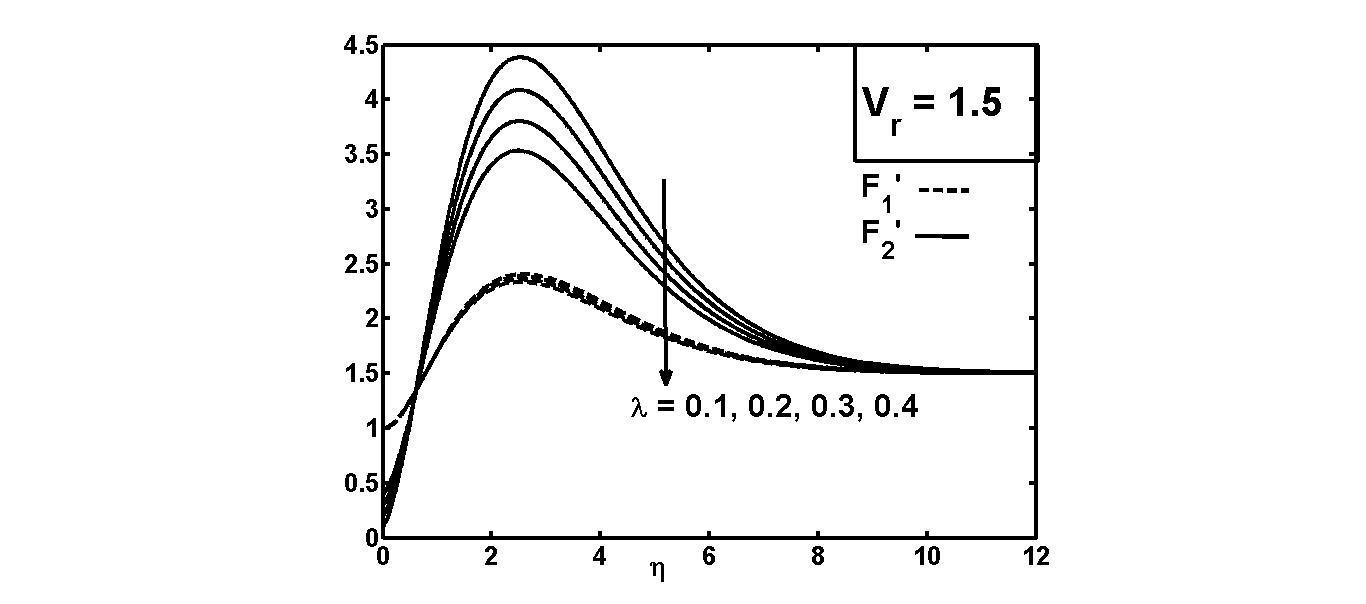}}%
\caption{Variability of $F_1'$ and $F_2'$ with $\lambda$  for $Fr=1$,  $K=1$, $Q_1^*=Q_2^*=1$, $Nt=0.2$, $Nb=0.2$, $ Sc=1$, $Pr=6.2$.}\label{fgdeshwithl1Vr}
\end{figure}
\begin{figure}[!t]
\makebox[\textwidth][c]{\includegraphics[width=1.5\textwidth]{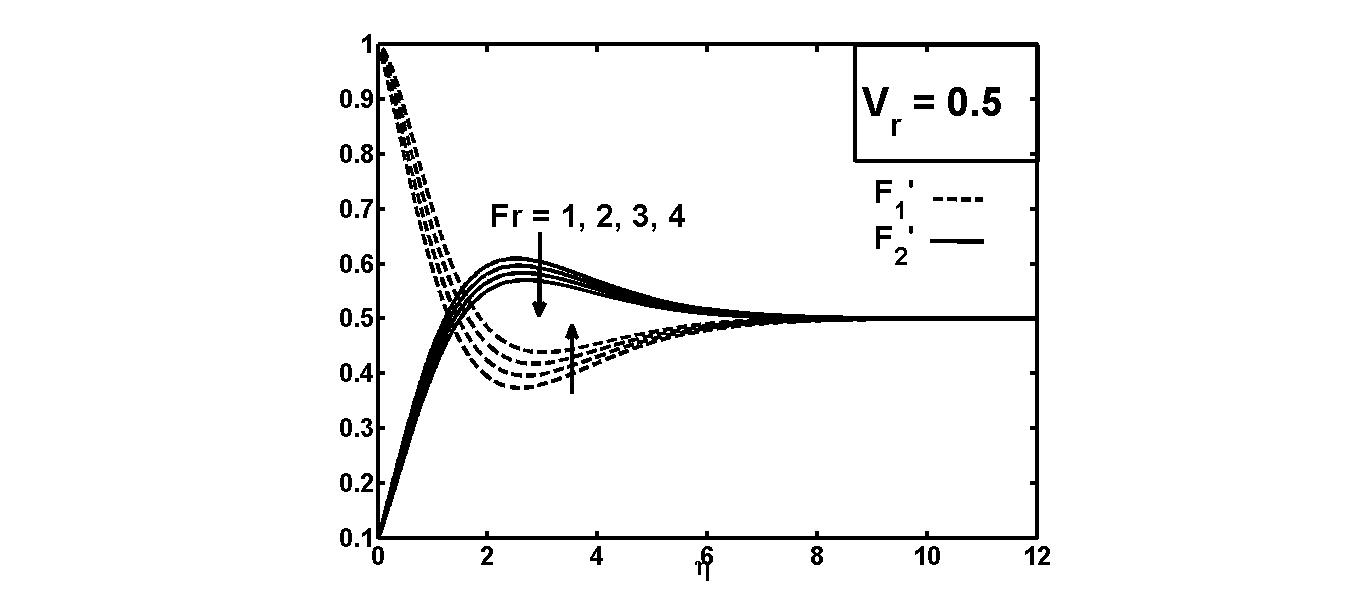}}%
\caption{Variability of $F_1'$ and $F_2'$ with $Fr$  for $\lambda=0.1$,  $K=1$, $Q_1^*=Q_2^*=1$, $Nt=0.2$, $Nb=0.2$, $ Sc=1$, $Pr=6.2$.}\label{fgdeshwithFr}
\end{figure}
\begin{figure}[!t]
\makebox[\textwidth][c]{\includegraphics[width=1.5\textwidth]{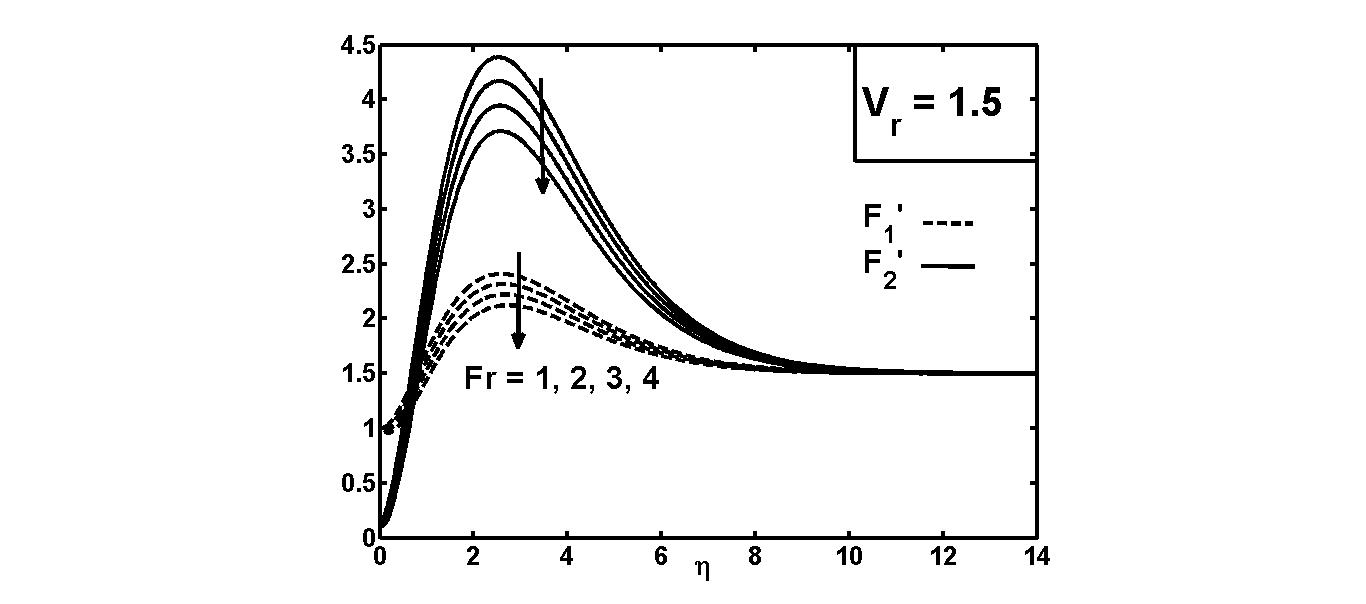}}%
\caption{Variability of $F_1'$ and $F_2'$ with $Fr$  for $\lambda=0.1$,  $K=1$, $Q_1^*=Q_2^*=1$, $Nt=0.2$, $Nb=0.2$, $ Sc=1$, $Pr=6.2$.}\label{fgdeshwithFrVr}
\end{figure}
\begin{figure}[!t]
\makebox[\textwidth][c]{\includegraphics[width=1.5\textwidth]{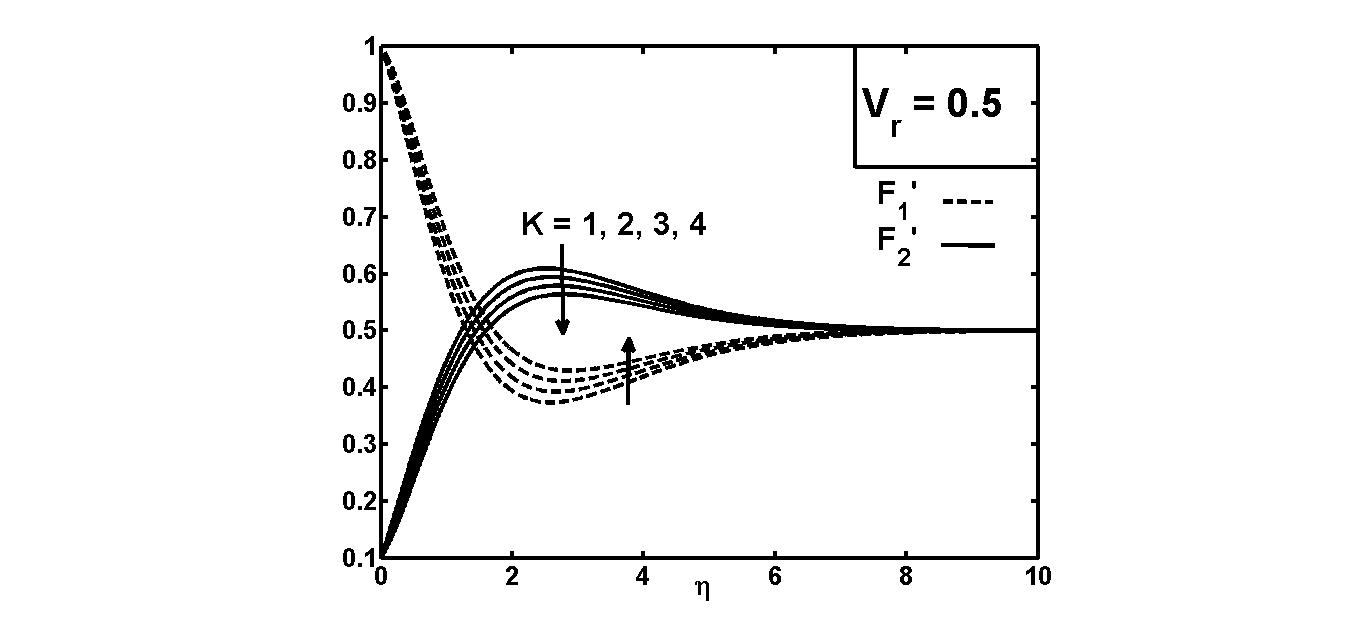}}%
\caption{Variability of $F_1'$ and $F_2'$ with $K$  for $\lambda=0.1$, $Fr=1$,  $Q_1^*=Q_2^*=1$, $Nt=0.2$, $Nb=0.2$, $ Sc=1$, $Pr=6.2$.}\label{fgdeshwithK}
\end{figure}
\begin{figure}[!t]
\makebox[\textwidth][c]{\includegraphics[width=1.5\textwidth]{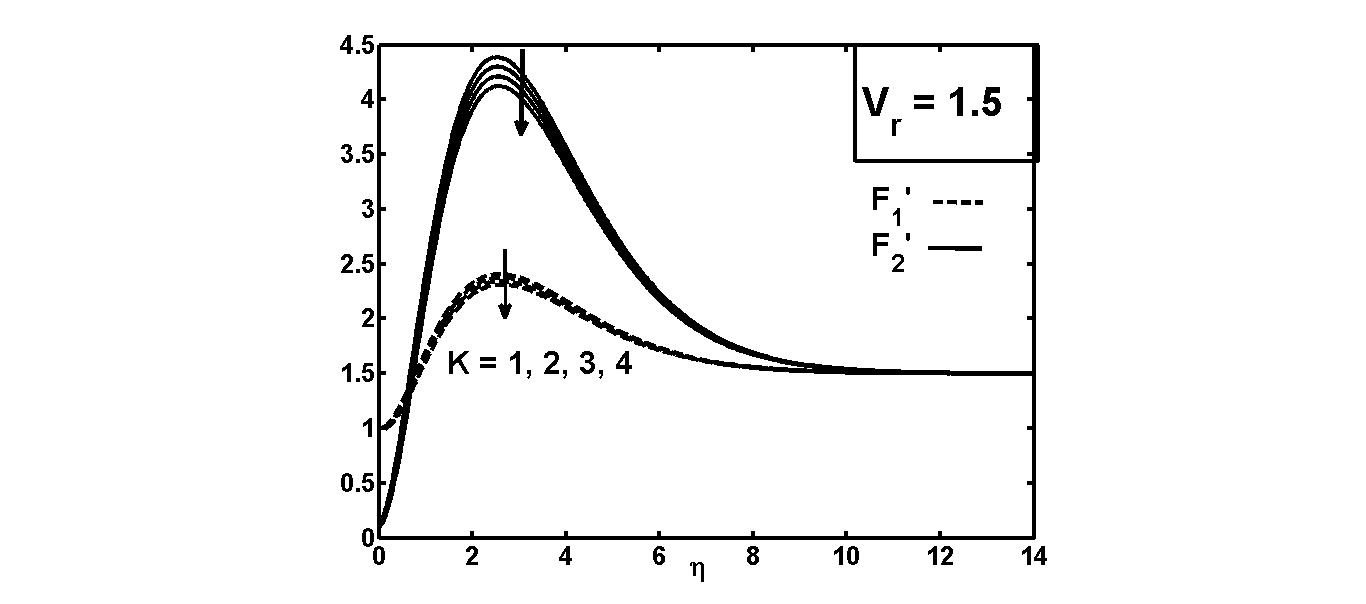}}%
\caption{Variability of $F_1'$ and $F_2'$ with $K$  for $\lambda=0.1$, $Fr=1$, $Q_1^*=Q_2^*=1$, $Nt=0.2$, $Nb=0.2$, $ Sc=1$, $Pr=6.2$.}\label{fgdeshwithKVr}
\end{figure}
\begin{figure}[!t]
\makebox[\textwidth][c]{\includegraphics[width=1.5\textwidth]{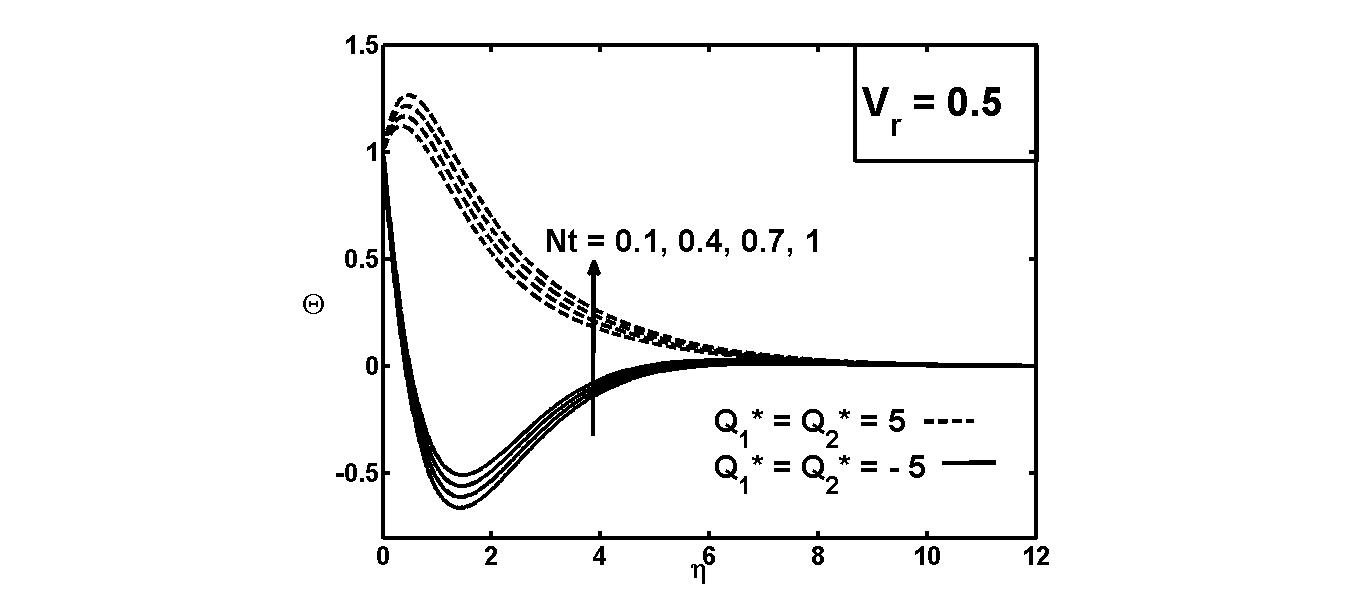}}%
\caption{Variability of $\Theta$ with $Nt$  for for $\lambda=0.1$, $Fr=1$,  $K=1$, $Nb=0.2$, $ Sc=1$, $Pr=6.2$.}\label{thetawithNt}
\end{figure}
\begin{figure}[!t]
\makebox[\textwidth][c]{\includegraphics[width=1.5\textwidth]{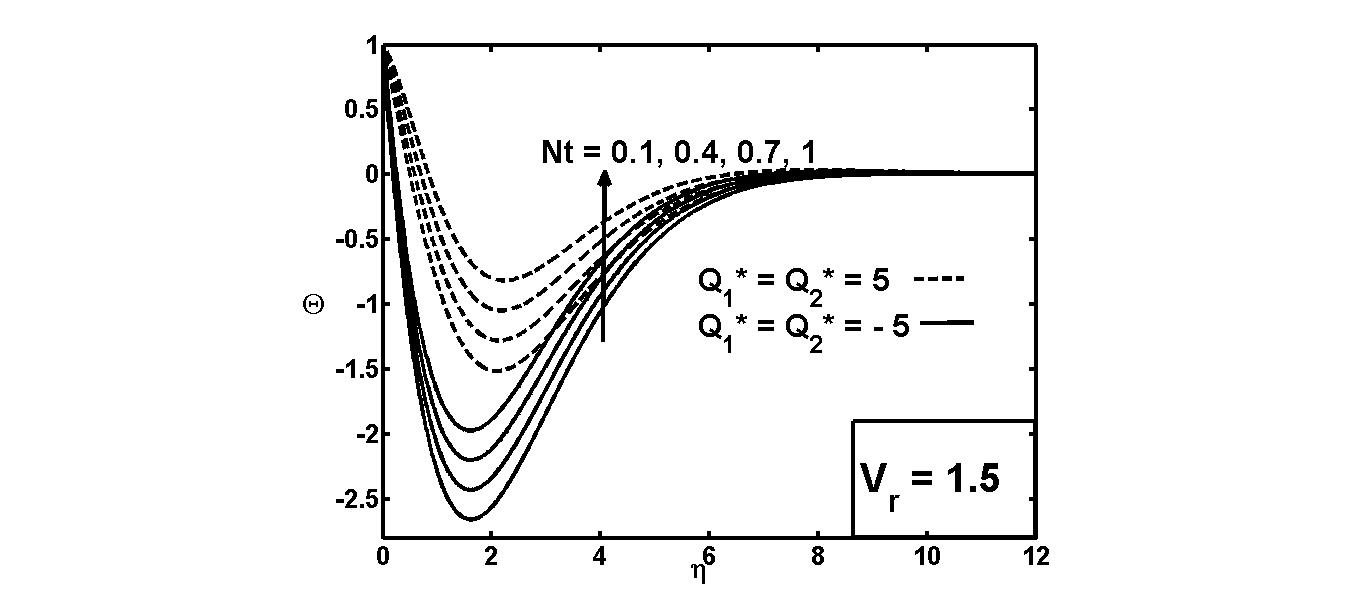}}%
\caption{Variability of $\Theta$ with $Nt$  for $\lambda=0.1$, $Fr=1$,  $K=1$, $Nb=0.2$, $ Sc=1$, $Pr=6.2$.}\label{thetawithNtVr}
\end{figure}
\begin{figure}[!t]
\makebox[\textwidth][c]{\includegraphics[width=1.5\textwidth]{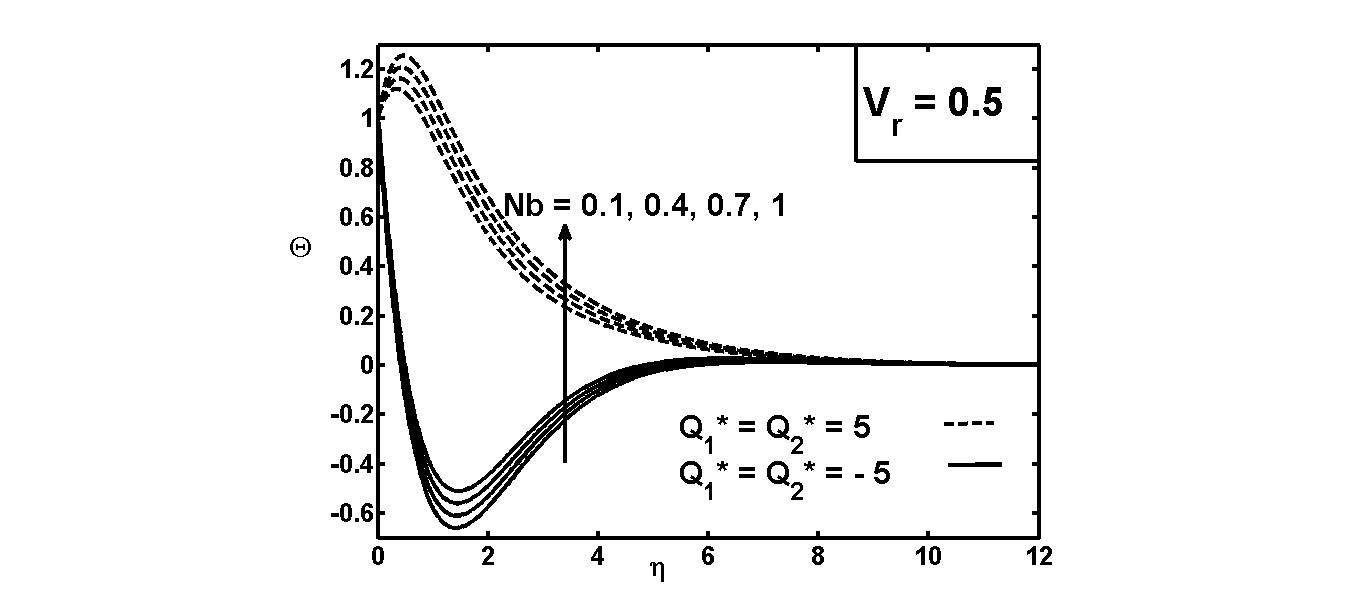}}%
\caption{Variability of $\Theta$ with $Nb$  for $\lambda=0.1$, $Fr=1$,  $K=1$, $Nt=0.2$, $ Sc=1$, $Pr=6.2$.}\label{thetawithNb}
\end{figure}
\begin{figure}[!t]
\makebox[\textwidth][c]{\includegraphics[width=1.5\textwidth]{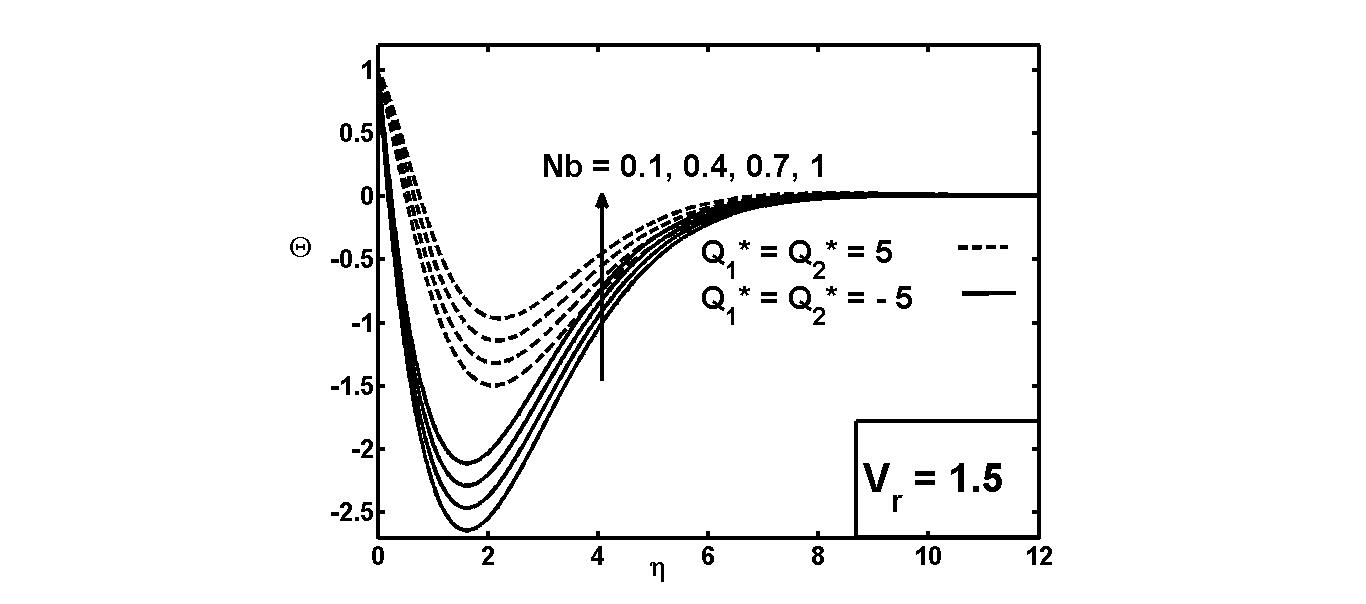}}%
\caption{Variability of $\Theta$ with $Nb$  for $\lambda=0.1$, $Fr=1$,  $K=1$, $Nt=0.2$, $ Sc=1$, $Pr=6.2$.}\label{thetawithNbVr}
\end{figure}
\begin{figure}[!t]
\makebox[\textwidth][c]{\includegraphics[width=1.5\textwidth]{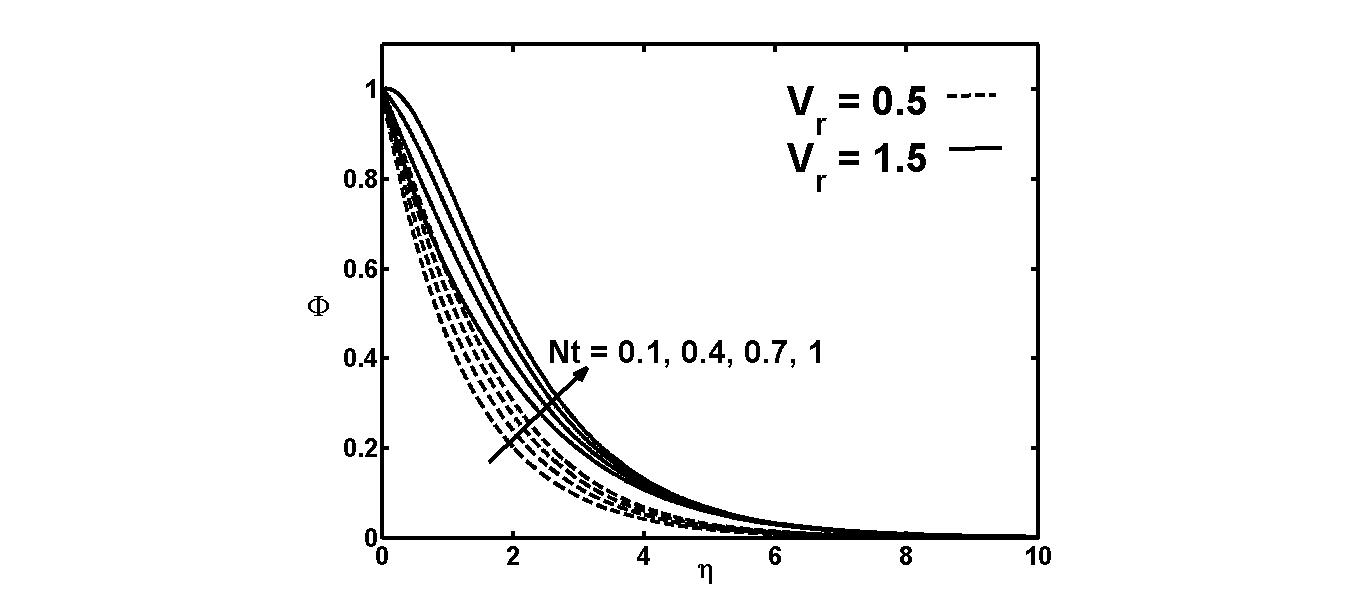}}%
\caption{Variability of $\Phi$ with $Nt$  for $\lambda=0.1$, $Fr=1$,  $K=1$, $Q_1^*=Q_2^*=1$, $Nb=0.2$, $ Sc=1$, $Pr=6.2$.}\label{phiwithNt}
\end{figure}
\begin{figure}[!t]
\makebox[\textwidth][c]{\includegraphics[width=1.5\textwidth]{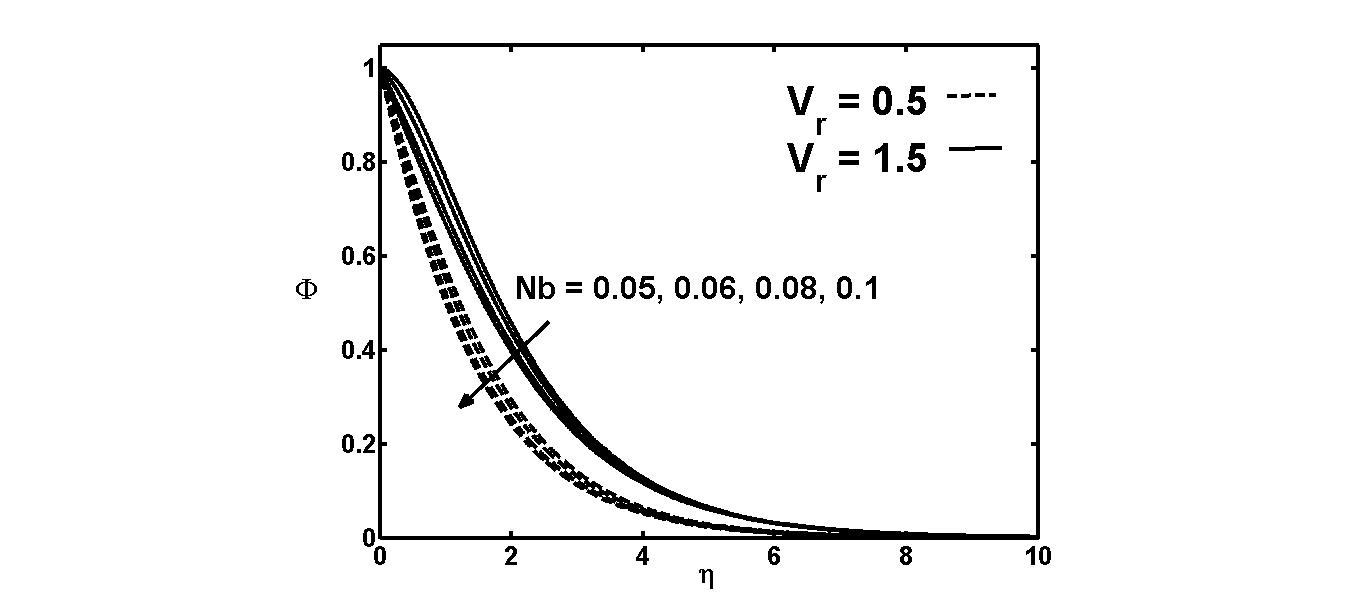}}%
\caption{Variability of $\Phi$ with $Nb$  for $\lambda=0.1$, $Fr=1$,  $K=1$, $Q_1^*=Q_2^*=1$, $Nt=0.2$, $ Sc=1$, $Pr=6.2$.}\label{phiwithNb}
\end{figure}
\begin{figure}[!t]
\makebox[\textwidth][c]{\includegraphics[width=1.5\textwidth]{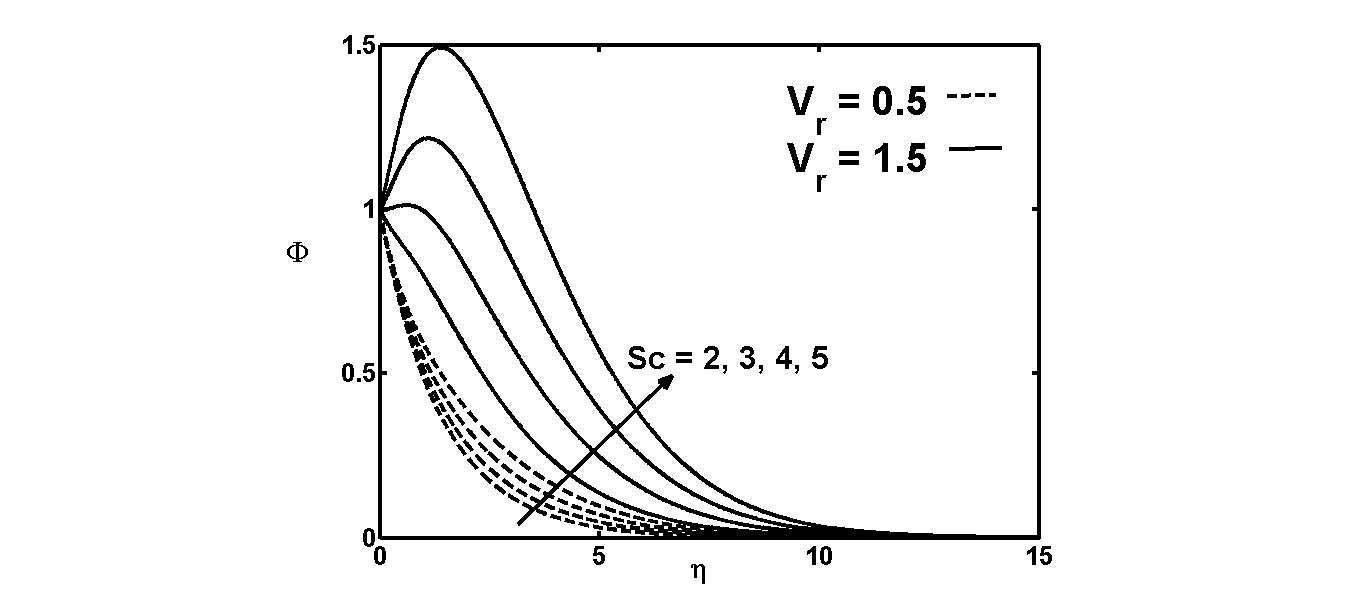}}%
\caption{Variability of $\Phi$ with $Sc$  for $\lambda=0.1$, $Fr=1$,  $K=1$, $Q_1^*=Q_2^*=1$, $Nt=0.2$, $Nb=0.2$, $Pr=6.2$.}\label{phiwithSc}
\end{figure}

\begin{table*}[!]
\begin{center}
\caption{ Values of $(F_1''(0), F_2''(0))$, $|\Theta'(0)|$ and $|\Phi'(0)|$ against $\lambda=0.1$, $Nt=0.2$, $Nb=0.2$, $ Sc=1$, $Pr=6.2$ when $V_r<1$.}\label{tb1}
\vspace* {0.1cm}
\begin{tabular} 
{ p{0.3in} p{0.3in} p{0.3in}  p{0.5in}   p{0.7in}   p{0.7in}  p{0.7in}  p{0.7in}  p{0.7in}  p{0.7in} }\hline
 $K$&$Fr$&$\lambda$&\hspace{-0.4cm}$Q_1^*=Q_2^*$ & \hspace{0cm}$F_1''(0)$&$F_2''(0)$& $|\Theta'(0)|$ &\hspace{0.1cm}$|\Phi'(0)|$ \\  \hline 
 
 1& 1&0.1&1 &-0.342941 &0.334565&0.790426&0.788254 \\
  2&&&&  -0.276810& 0.281661&0.788629&0.788302     \\
  3&&&& -0.210680&0.228757&0.786832 &0.788350   \\
  4&&&& -0.144549&0.175852&0.785036&0.788399        \\
  
  1& 1&0.1& 1&-0.342941 &0.334565&0.790426&0.788254 \\
  &2&&&  -0.243262&0.303132&0.789447&0.789094   \\
  &3&&&-0.143583&0.271698&0.788469 &0.789935   \\
  &4&&&-0.043905&0.240265&0.787491&0.790775    \\
  
  1&1&0.1& 1&-0.342941 &0.334565&0.790426&0.788254 \\
  &&0.2&&  -0.329715&0.251197&0.782007&0.800461   \\
  &&0.3&&-0.316489&0.173157&0.775620&0.812801     \\
  &&0.4&&-0.303262&0.100447&0.771265&0.825273  \\
  
  1&1&0.1&0.5 &-0.342941 &0.334565&0.997969&0.788452\\
  &&&1& -0.342941&0.334565&0.790426&0.788254   \\
  &&&1.5& -0.342941&0.334565&0.582798&0.788056  \\
  &&&2&-0.342941&0.334565&0.375085&0.787858 \\\hline
  
\end{tabular}
\end{center}
\end{table*} 
\begin{table*}[!]
\begin{center}
\caption{Values of $(F_1''(0), F_2''(0))$, $|\Theta'(0)|$ and $|\Phi'(0)|$ against $\lambda=0.1$, $Nt=0.2$, $Nb=0.2$, $ Sc=1$, $Pr=6.2$ when $V_r>1$.}\label{tb2}
\vspace* {0.1cm}
\begin{tabular} 
{ p{0.3in} p{0.3in} p{0.3in}  p{0.5in}   p{0.7in}   p{0.7in}  p{0.7in}  p{0.7in}  p{0.7in}  p{0.7in} }\hline
$K$&$Fr$&$\lambda$&\hspace{-0.4cm}$Q_1^*=Q_2^*$ & \hspace{0cm}$F_1''(0)$&$F_2''(0)$& $|\Theta'(0)|$ &\hspace{0.1cm}$|\Phi'(0)|$ \\  \hline 

 1& 1&0.1&1 &-0.048603 &0.711494&3.478013&0.520178 \\
  2&&&&-0.114733& 0.526329&3.512147&0.519261    \\
  3&&&&-0.180864&0.341163&3.546281&0.518343 \\
  4&&&& -0.246994&0.155998&3.580416&0.517426      \\

  1& 1&0.1&1 &-0.048603 &0.711494&3.478013&0.520178 \\
  &2&&& -0.213446& 0.419016&3.584827&0.521695   \\
  &3&&& -0.378289&0.126538&3.691642&0.523211 \\
  &4&&&-0.543132&-0.165941&3.798457&0.524728     \\

  1&1 &0.1& 1&-0.048603&0.711494&3.478013&0.520178 \\
  &&0.2&&-0.035377&0.618949&3.304019&0.533288    \\
   &&0.3&&-0.022150&0.531732&3.132057&0.546530 \\ 
  &&0.4&&  -0.008924&0.449845&2.962127&0.559904    \\

  1& 1&0.1&0.5 &-0.048603 &0.711494&3.679222&0.520377  \\
  &&&1&-0.048603 &0.711494&3.478013&0.520178    \\
  &&&1.5&-0.048603 &0.711494&3.276718&0.519980\\
  &&&2&-0.048603 &0.711494& 3.075339&0.519782     \\\hline
  
\end{tabular}
\end{center}
\end{table*} 
\begin{table*}[!]
\begin{center}
\caption{ Values of $|\Theta'(0)|$ and $|\Phi'(0)|$ against $\lambda=0.1$, $Fr=1$,  $K=1$, $Pr=6.2$ considering $Q_1^*=Q_2^*=-0.5$ and $Q_1^*=Q_2^*=0.5$ when $V_r<1$.}\label{tb3}
\vspace* {0.0cm}
\begin{tabular}
{ p{0.2in} p{0.2in} p{0.6in}  p{1.2in}   p{1.2in}   p{1.2in}  p{0.7in}  p{0.7in}  p{0.7in}  p{0.7in} }\hline
&&&\hspace{1.5cm}Case-1&&\hspace{1.5cm}Case-2 \\
&&&\hspace{1.2cm}{$Q_1^*=Q_2^*<0$}&&\hspace{1.2cm}{$Q_1^*=Q_2^*>0$}\\
 \hline
$Nt$&$Nb$& $Sc$ &  $|\Theta'(0)|$ &$|\Phi'(0)|$ & $|\Theta'(0)|$ &$|\Phi'(0)|$   \\  \hline

 0.1&0.2&1          &1.472495 &0.839215       & 1.058279&0.839016 \\
  0.4&&             &1.294153 &0.693505            & 0.878091&0.692713   \\
  0.7&&             &1.118039 &0.563965           & 0.700131&0.562578\\
  0.9&&               &1.001867&0.486587               &0.582729&0.484804 \\

 0.2&0.05&1              &1.498066 &0.470481                    &1.083921&0.468896 \\
  &0.06&                 &1.492373 &0.541229                      &1.078182&0.539909   \\
  &0.08&                 &1.480991 &0.629665                      &1.066708&0.628674\\
  &0.1&                  &1.469613&0.682726                       &1.055239&0.681933  \\
 
0.2&0.2&2             &1.412281&0.795764                    &0.997450&0.795368\\
&&2.4                 &1.412073&0.795493                    &0.997242&0.795097\\
  &&2.7               &1.411918 &0.794151                   &0.997086&0.793755\\
  &&3                 &1.411762 &0.791833                   &0.996931&0.791437\\\hline
  \end{tabular}
\end{center}
\end{table*} 
\clearpage
\begin{table*}[!]
\begin{center}
\caption{ Values of $|\Theta'(0)|$ and $|\Phi'(0)|$ against $\lambda=0.1$, $Fr=1$,  $K=1$, $Pr=6.2$ considering $Q_1^*=Q_2^*=-0.5$ and $Q_1^*=Q_2^*=0.5$ when $V_r>1$.}\label{tb4}
\vspace* {0.0cm}
\begin{tabular}
{ p{0.2in} p{0.2in} p{0.6in}  p{1.2in}   p{1.2in}   p{1.2in}  p{0.7in}  p{0.7in}  p{0.7in}  p{0.7in} }\hline
&&&\hspace{1.5cm}Case-1&&\hspace{1.5cm}Case-2 \\
&&&\hspace{1.2cm}{$Q_1^*=Q_2^*<0$}&&\hspace{1.2cm}{$Q_1^*=Q_2^*>0$}\\
 \hline
$Nt$&$Nb$& $Sc$ &  $|\Theta'(0)|$ &$|\Phi'(0)|$ & $|\Theta'(0)|$ &$|\Phi'(0)|$   \\  \hline

 0.1&0.2&1             &4.230189&0.600985                   &3.828641&0.600787 \\
  0.4&&                &3.784521 &0.365738                  &3.381127&0.364945   \\
  0.7&&                &3.341081&0.146660                   &2.935841&0.145273\\
  0.9&&                  &3.046692&0.009590                   &2.640221&0.007807   \\

0.2&0.05&1             &4.263726 &0.023330            &3.862248&0.021746 \\
  &0.06&               &4.251561 &0.133873              &3.850037&0.132552 \\
  &0.08&               &4.227235 &0.272052             &3.825620&0.271061\\
  &0.1&                &4.202915 &0.354959              &3.801208&0.354166  \\

0.2&0.2 &2        &4.071485&0.307618                  &3.669321&0.307221 \\
        &&2.4          &4.067524&0.214874                  &3.665360&0.214478 \\
        &&2.7        &4.064554&0.142511                 &3.662390&0.142115    \\
        &&3          &4.061584&0.067743                  &3.659420&0.067347\\\hline
\end{tabular}
\end{center}
\end{table*} 
\clearpage

\end{document}